\renewenvironment{proof}{{\noindent\it Proof.}}{\hfill $\blacksquare$\par}
\tikzstyle{startstop} = [rectangle, rounded corners, minimum width = 0.2cm, minimum height=0.2cm,text centered, draw = black, fill = red!40]
\tikzstyle{io} = [trapezium, rounded corners, trapezium left angle=70, trapezium right angle=110, minimum width=1cm, minimum height=0.5cm, text centered, draw=black, fill = blue!40]
\tikzstyle{process} = [rectangle, rounded corners, minimum width=0.5cm, minimum height=0.4cm, text centered, draw=black, fill = yellow!50]
\tikzstyle{decision} = [diamond, rounded corners, aspect=3.7,text centered, draw=black, align = center, minimum width = 0.5cm, minimum height = 0.4cm,fill = green!30]
\tikzstyle{arrow} = [->,>=stealth,rounded corners]
\newcommand{\PreserveBackslash}[1]{\let\temp=\\#1\let\\=\temp}
\newcolumntype{C}[1]{>{\PreserveBackslash\centering}p{#1}}
\newcolumntype{R}[1]{>{\PreserveBackslash\raggedleft}p{#1}}
\newcolumntype{L}[1]{>{\PreserveBackslash\raggedright}p{#1}}
\newcommand{\ab}{\textit{ab initio}}
\newcommand{\st}{\mathrm{s.}~\mathrm{t.}}
\newcommand{\lrbrace}[1]{\left\{#1\right\}}
\newcommand{\lrsquare}[1]{\left[#1\right]}
\newcommand{\white}[1]{\textbf{\color{white}{#1}}}
\newcommand{\T}{\top}
\newcommand{\trace}{\mathrm{Tr}}
\newcommand{\dev}{\textup{\text{dev}}}
\newcommand{\atom}{\textup{\text{atom}}}
\newcommand{\inter}{\textup{\text{inter}}}
\newcommand{\latt}{\textup{\text{latt}}}
\newcommand{\trial}{\textup{\text{trial}}}
\newcommand{\Dlak}{\tilde D_{\latt}^{(k)}}
\newcommand{\aaFlak}{\tilde F_{\latt}^{(k)}}
\newcommand{\Ainterkl}{A_{\inter}^{(k,\ell)}}
\newcommand{\alatkl}{\alpha_{\atom}^{(k-1,\ell)}}
\newcommand{\allakl}{\alpha_{\latt}^{(k-1,\ell)}}
\newcommand{\Atkl}{A_{\trial}^{(k,\ell)}}
\newcommand{\Rtkl}{R_{\trial}^{(k,\ell)}}
\newcommand{\eps}{\varepsilon}
\newcommand{\F}{\mathrm{F}}
\newcommand{\calF}{\mathcal{F}}
\newcommand{\calT}{\mathcal{T}}
\newcommand{\mb}[1]{\mathbf{#1}}
\newcommand{\aaa}{\mb{a}}
\newcommand{\rr}{\mb{r}}
\newcommand{\norm}[1]{\left\Vert#1\right\Vert}
\newcommand{\snorm}[1]{\Vert#1\Vert}
\newcommand{\abs}[1]{\left|#1\right|}
\newcommand{\inner}[1]{\left\langle#1\right\rangle}
\newcommand{\R}{\mathbb{R}}
\newcommand{\N}{\mathbb{N}}
\newcommand{\abb}{\text{ABB}}
\newcommand{\bb}{\text{BB}}
\newcommand{\calI}{\mathcal{I}}
\newcommand{\ax}{\text{ax}}
\newcommand{\y}{\mb{y}}
\newcommand{\magg}{\text{mag}}
\newcommand{\dd}{\,\mathrm{d}}
\DeclareMathOperator*{\argmin}{arg\,min}
\newcommand{\angstrom}{\text{\normalfont\AA}}
\newtheorem{thm}{Theorem}
\newtheorem{lem}{Lemma}
\newtheorem{assume}{Assumption}
\begin{document}
	
	\title{Projected gradient descent algorithm for \ab~crystal structure relaxation\\under a fixed unit cell volume}
	
	\author{Yukuan Hu}%
	\affiliation{State Key Laboratory of Scientific and Engineering Computing, Institute of Computational Mathematics and Scientific/Engineering Computing, Academy of Mathematics and Systems Science, Chinese Academy of Sciences, Beijing 100190, China}%
	\affiliation{School of Mathematical Sciences, University of Chinese Academy of Sciences, Beijing 100049, China}%
	\author{Junlei Yin}%
	\affiliation{State Key Laboratory of Solidification Processing, Northwestern Polytechnical University, Xi'an 710072, Shaanxi, China}%
	\affiliation{Innovation Center, NPU Chongqing, Chongqing 401135, China}
	\author{Xingyu Gao}%
	\email[Corresponding author: ]{gao\_xingyu@iapcm.ac.cn}
	\affiliation{Laboratory of Computational Physics, Institute of Applied Physics and Computational Mathematics, Beijing 100088, China}%
	\author{Xin Liu}%
	\email[Corresponding author: ]{liuxin@lsec.cc.ac.cn}
	\affiliation{State Key Laboratory of Scientific and Engineering Computing, Institute of Computational Mathematics and Scientific/Engineering Computing, Academy of Mathematics and Systems Science, Chinese Academy of Sciences, Beijing 100190, China}%
	\affiliation{School of Mathematical Sciences, University of Chinese Academy of Sciences, Beijing 100049, China}%
	\author{Haifeng Song}%
	\email[Corresponding author: ]{song\_haifeng@iapcm.ac.cn}
	\affiliation{Laboratory of Computational Physics, Institute of Applied Physics and Computational Mathematics, Beijing 100088, China}%
	
	\begin{abstract}
		This paper is concerned with \ab~crystal structure relaxation under a fixed unit cell volume, which is a step in calculating the static equations of state and forms the basis of thermodynamic property calculations for materials. The task can be formulated as an energy minimization with a determinant constraint. Widely used line minimization-based methods (e.g., conjugate gradient method) lack both efficiency and convergence guarantees due to the nonconvex nature of the feasible region as well as the significant differences in the curvatures of the potential energy surface with respect to atomic and lattice components. To this end, we propose a projected gradient descent algorithm named PANBB. It is equipped with (i) search direction projections onto the tangent spaces of the nonconvex feasible region for lattice vectors, (ii) distinct curvature-aware initial trial step sizes for atomic and lattice updates, and (iii) a nonrestrictive line minimization criterion as the stopping rule for the inner loop. It can be proved that PANBB favors theoretical convergence to equilibrium states. Across a benchmark set containing 223 structures from various categories, PANBB achieves average speedup factors of approximately 1.41 and 1.45 over the conjugate gradient method and direct inversion in the iterative subspace implemented in off-the-shelf simulation software, respectively. Moreover, it normally converges on all the systems, manifesting its unparalleled robustness. As an application, we calculate the static equations of state for the high-entropy alloy AlCoCrFeNi, which remains elusive owing to 160 atoms representing both chemical and magnetic disorder and the strong local lattice distortion. The results are consistent with the previous calculations and are further validated by experimental thermodynamic data.
	\end{abstract}
	
	\maketitle
	
	\section{Introduction} 
	
	Crystal structure relaxation seeks equilibrium states on the potential energy surface (PES). It underpins various applications such as crystal structure predictions \cite{oganov2006crystal,wang2010crystal,chen2017sgo,wang2022crystal} and high-throughput calculations in material design \cite{goedecker2005global,vilhelmsen2012systematic,hu2013pressure,li2013global,zhang2013genetic}. Relaxing crystal structures under a fixed unit cell volume is a step in calculating the static equations of state (EOS) for materials \cite{vinet1989universal,jackson2015crystal} and also presents the volume-dependent configurations for thermal models such as the quasiharmonic approximation \cite{born1988dynamical, carrier2007first}, classical mean-field potential \cite{wang2000calculated, wang2000classical, wang2000mean, li2001thermodynamic, song2007modified}, and electronic excitation based on Mermin statistics \cite{wasserman1996, jarlborg1997}, etc. 
	
	\par Mathematically, relaxing the crystal structure under a fixed unit cell volume can be formulated as minimizing the potential energy with a determinant constraint, where a nonconvex feasible region arises. Developing convergent methods for general nonconvex constrained optimization problems remains challenging, since classical tools (e.g., projection) and theories (e.g., duality) become inapplicable. As far as we know, little attention has been paid to the determinant-constrained optimization.
	
	\par In practice, line minimization (LM)-based methods are usual choices for this purpose. A flowchart of the widely used LM-based methods is depicted in Fig. \ref{fig:existing methods flowchart}. Each iteration of these methods primarily consists of updating the search directions followed by an LM process to determine the step sizes. Among the popular representatives in this class are the conjugate gradient method (CG) \cite{shewchuk1994introduction} and direct inversion in the iterative subspace (DIIS) \cite{pulay1980convergence}. Specifically, CG searches along the conjugate gradient directions made up by previous directions and current forces, while DIIS employs historical information to construct the quasi-Newton directions. In most cases, CG exhibits steady convergence but can be unsatisfactory in efficiency. DIIS converges rapidly in the small neighborhoods of local minimizers but can easily diverge when starting far from equilibrium states.
	
	\par As shown in Fig. \ref{fig:existing methods flowchart}, the performance of the LM-based methods crucially depends on three key factors: search directions, initial trial step sizes, and LM criterion. However, the fixed-volume relaxation presents substantial difficulties on their designs. Firstly, due to the nonconvex nature of the feasible region, employing search directions for lattice vectors without tailored modifications can result in considerable variations in the configurations, {giving rise to} drastic fluctuations in the potential energies; for an illustration, please refer to Fig. \ref{fig:cg full test} later. Secondly, on account of the 
	significant differences in the curvatures of PES with respect to the atomic and lattice components, adopting shared initial trial step sizes, as in the widely used LM-based methods (see S2 in Fig. \ref{fig:existing methods flowchart}), can be detrimental to algorithm efficiency. This point is recognized through extensive numerical tests, where the relative displacements in the lattice vectors during relaxation are typically much smaller than those in the atomic positions. We refer readers to Fig. \ref{fig:block compare} later for an illustration. Thirdly, the monotone criterion utilized in the widely used LM-based methods mandates a strict reduction in the potential energy after each iteration, often making the entire procedure trapped in the LM process. According to incomplete statistics, the CG implemented in off-the-shelf simulation software spends approximately 60\% of its overhead on the trials rejected by the LM criterion, as supported by the Supplemental Material. It is worth noting that the authors of \cite{hu2022force} devise a nonmonotone LM criterion but only consider atomic degrees of freedom. 
	
	\begin{figure}[!t]
		\centering
		\begin{tikzpicture}[node distance=1.2cm,font=\footnotesize]
			\draw [dashed,rounded corners,fill = gray!30] (-3.4cm,-4.4cm) rectangle (4.65cm,-10.55cm);
			\node [startstop,minimum width = 1.5cm] (start) {Start};
			\node [io, below of = start, align = center,yshift=0.2cm] (input) {Input initial configuration $(R^{(0)},A^{(0)})$,\\unit cell volume $V$, stretching factor $s>0$,\\force tolerance $\eps>0$; set $k:=0$};
			\node [decision, below of = input, yshift=-0.55cm] (tolerance) {Force tolerance satisfied?};
			\node [startstop, right of = tolerance, xshift=2.2cm] (stop) {Stop};
			\node [process, below of = tolerance, yshift=-0.1cm, align=center] (updatedir) {{\bf S1.} Update \textbf{search directions} $D_{\atom}^{(k)}$ and $D_{\latt}^{(k)}$};
			\node [process, below of = updatedir,align = center,yshift=0.25cm] (compstep) {{\bf S2.} Compute an \textbf{initial trial step size} $\alpha^{(k-1,0)}$; set $\ell:=0$};
			\node [process, below of = compstep, align = center,yshift=-0.2cm] (updatefactorconfig) {{\bf S3.} Compute a new trial atomic configuration\\ $\Rtkl:=R^{(k)}+\alpha^{(k-1,\ell)}D_{\atom}^{(k)}$\\and a new intermediate lattice matrix\\$\Ainterkl:=A^{(k)}+\alpha^{(k-1,\ell)}D_{\latt}^{(k)}$};
			\node [process, below of = updatefactorconfig, yshift=-0.5cm, align=center] (scaling) {{\bf S4.} Perform the scaling operation\\ $\Atkl:=\sqrt[3]{\frac{V}{\det\big(\Ainterkl\big)}}\Ainterkl$};
			\node [process, right of = scaling, align = center, xshift = 2.15cm] (updatestep) {{\bf S6.} Update $\alpha^{(k-1,\ell)}$\\to $\alpha^{(k-1,\ell+1)}$;\\set $\ell:=\ell+1$};
			\node [decision, below of = scaling,yshift=-.55cm] (ls) {{\bf S5.} \textbf{LM criterion}\\accepts trial?};
			\node [process, below of = ls,yshift=-0.3cm,align=center] (updateconfig) {{\bf S7.} Update the step size $\alpha^{(k)}:=\alpha^{(k-1,\ell)}$ and configuration\\$(R^{(k+1)},A^{(k+1)}):=(\Rtkl,\Atkl)$; set $k:=k+1$};
			\draw [arrow] (start) -- (input);
			\draw [arrow] (input) -- node [process,right,rounded corners = 0pt,fill=blue!80,xshift=0.1cm] {\white{KS}} (tolerance);
			\draw [arrow] (tolerance) -- node [above] {Yes} (stop);
			\draw [arrow] (tolerance) -- node [right] {No} (updatedir);
			\draw [arrow] (updatedir) -- (compstep);
			\draw [arrow] (compstep) -- (updatefactorconfig);
			\draw [arrow] (updatefactorconfig) -- (scaling);
			\draw [arrow] (scaling) -- (ls);
			\draw [arrow] (ls) -- node [right] {Yes} (updateconfig);
			\draw [arrow] (updateconfig.west) -- ++ (-0.5cm,0) -- ++ (0,7.5cm) |- (tolerance.west); 
			\draw [arrow] (ls.east) -- ++ (0.5cm,0) -| (updatestep.south);
			\draw [arrow] (updatestep.north) -- ++ (0,.8cm) |- (updatefactorconfig.east);
			\node [right of = ls, xshift = 1.8cm, yshift = 0.15cm] {No};
			\node [process, below of = scaling, xshift = 0.4cm, yshift = 0.35cm, rounded corners = 0pt, fill = blue!80] {\white{KS}};
			\node at (3.05cm,-4.57cm) {\textbf{\textit{Line minimization process}}};
		\end{tikzpicture}
		\caption{A flowchart of the widely used LM-based methods. The bold text ``\textbf{KS}'' refers to solving Kohn-Sham (KS) equations to obtain new energy, atomic forces, and lattice stress whenever the configuration gets updated. The LM process is marked out by the shaded box. The subscript ``inter'' indicates the intermediate lattice matrix.}
		\label{fig:existing methods flowchart}
	\end{figure}
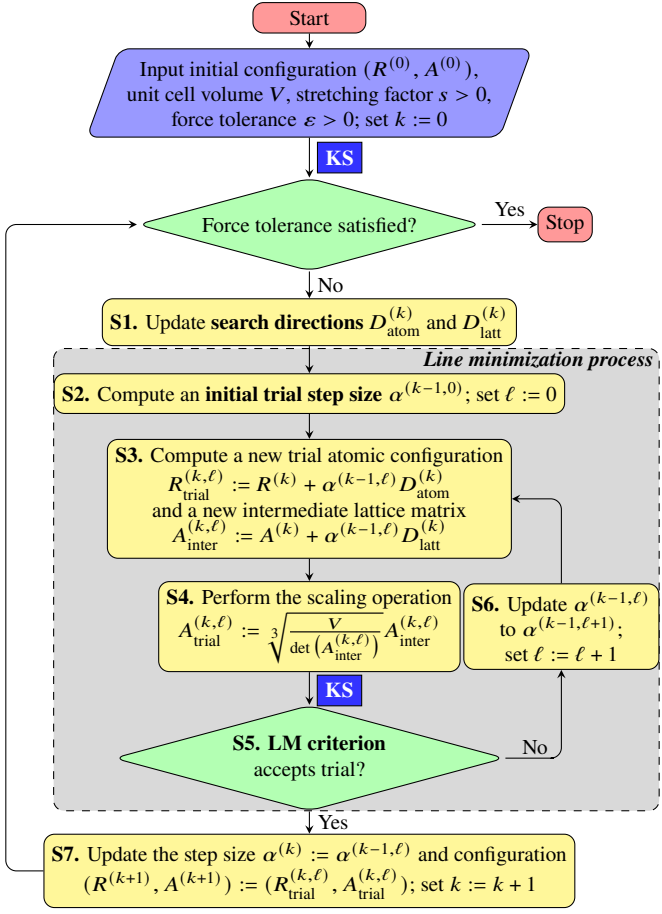 
	
	\par In this work, we propose a projected gradient descent algorithm, called PANBB, that addresses all the aforementioned difficulties. Specifically, PANBB (i) employs search direction projections onto the tangent spaces of the nonconvex feasible region for lattice vectors, (ii) adopts distinct curvature-aware step sizes for the atomic and lattice components, and (iii) incorporates a nonmonotone LM criterion that encompasses both atomic and lattice degress of freedom. The convergence of PANBB to equilibrium states has been theoretically established. 
	
	\par PANBB has been tested across a benchmark set containing 223 systems from various categories, with approximately 68.6\% of them are metallic systems, for which the EOS calculations are more demanding. PANBB exhibits average speedup factors of approximately 1.41 and 1.45 over the CG and DIIS implemented in off-the-shelf simulation software, respectively, in terms of running time. In contrast to the failure rates of approximately 4.9\% for CG and 25.1\% for DIIS, PANBB consistently converges to the equilibrium states across the broad, manifesting its unparalleled robustness. As an application of PANBB, we calculate the static EOS of the high-entropy alloy (HEA) AlCoCrFeNi, whose fixed-volume relaxation remains daunting owing to the strong local lattice distortion (LLD) \cite{song2017local}. The numerical results are consistent with the previous calculations \cite{wu2020structural} obtained by relaxing only atomic positions. We further apply the modified mean field potential (MMFP) approach \cite{song2007modified} to the thermal EOS, which is validated by the X-ray diffraction and diamond anvil cell experiments \cite{cheng2019pressure}.
	
	\par The notations are gathered as follows. We use $N\in\N$ for the number of atoms, and use $R:=[\rr_1,\ldots,\rr_N]\in\R^{3\times N}$ and $A:=[\aaa_1,\aaa_2,\aaa_3]\in\R^{3\times3}$ to denote the Cartesian atomic positions and lattice matrix, respectively. The reciprocal lattice matrix $A^{-\T}$ is abbreviated as $B\in\R^{3\times3}$. We denote by $E(R,A)\in\R$, $F_{\atom}(R,A)\in\R^{3\times N}$, $F_{\latt}(R,A)\in\R^{3\times3}$, $\Sigma(R,A)\in\R^{3\times3}$, and $\Sigma_{\dev}(R,A)\in\R^{3\times3}$, respectively, the potential energy, atomic forces, lattice force, full stress tensor, and deviatoric stress tensor evaluated at the configuration $(R,A)$; if no confusion arises, we omit specifying $(R,A)$. From \cite{atalla2013density,knuth2015strain}, one obtains the relation
	\begin{equation}
		F_{\latt}=V\Sigma B-F_{\atom}R^\T B.
		\label{eqn:lattice force}
	\end{equation}
	When describing algorithms, we use the first and second superscripts in brackets to indicate respectively the outer and inner iteration numbers (e.g., $E^{(k)}$ and $\alpha^{(k-1,\ell)}$). We denote by ``$\mb{I}$'' the identity mapping from $\R^{3\times N}$ to $\R^{3\times N}$. The notation ``$\inner{\cdot,\cdot}$'' refers to the standard inner product between two matrices, defined as $\inner{M,N}:=\trace(M^\T N)$, whereas ``$\snorm{\cdot}_\F$'' yields the matrix Frobenius norm as $\snorm{M}_\F:=\sqrt{\inner{M,M}}$. 
	
	\section{Algorithmic Developments}
	
	\par In the sequel, we expound on our algorithmic developments from three aspects: search directions, initial trial step sizes, and LM algorithm and criterion.
	
	\subsection{Search Directions}
	
	\par For the atomic part, we simply take the steepest descent directions, i.e., 
	\begin{equation*}
		D_{\atom}^{(k)}:=F_{\atom}^{(k)}~\text{for all}~k\ge0.
	\end{equation*}
	As for the lattice part, we first note that the steepest descent direction $F_{\latt}^{(k)}$ (see Eq. \eqref{eqn:lattice force}) cannot be used directly due to the volume constraint; otherwise, there can be unexpected large variations in the lattice vectors, resulting in drastic fluctuations in the potential energies.
	\begin{figure*}[!t]
		\centering
		\includegraphics[width=.45\linewidth]{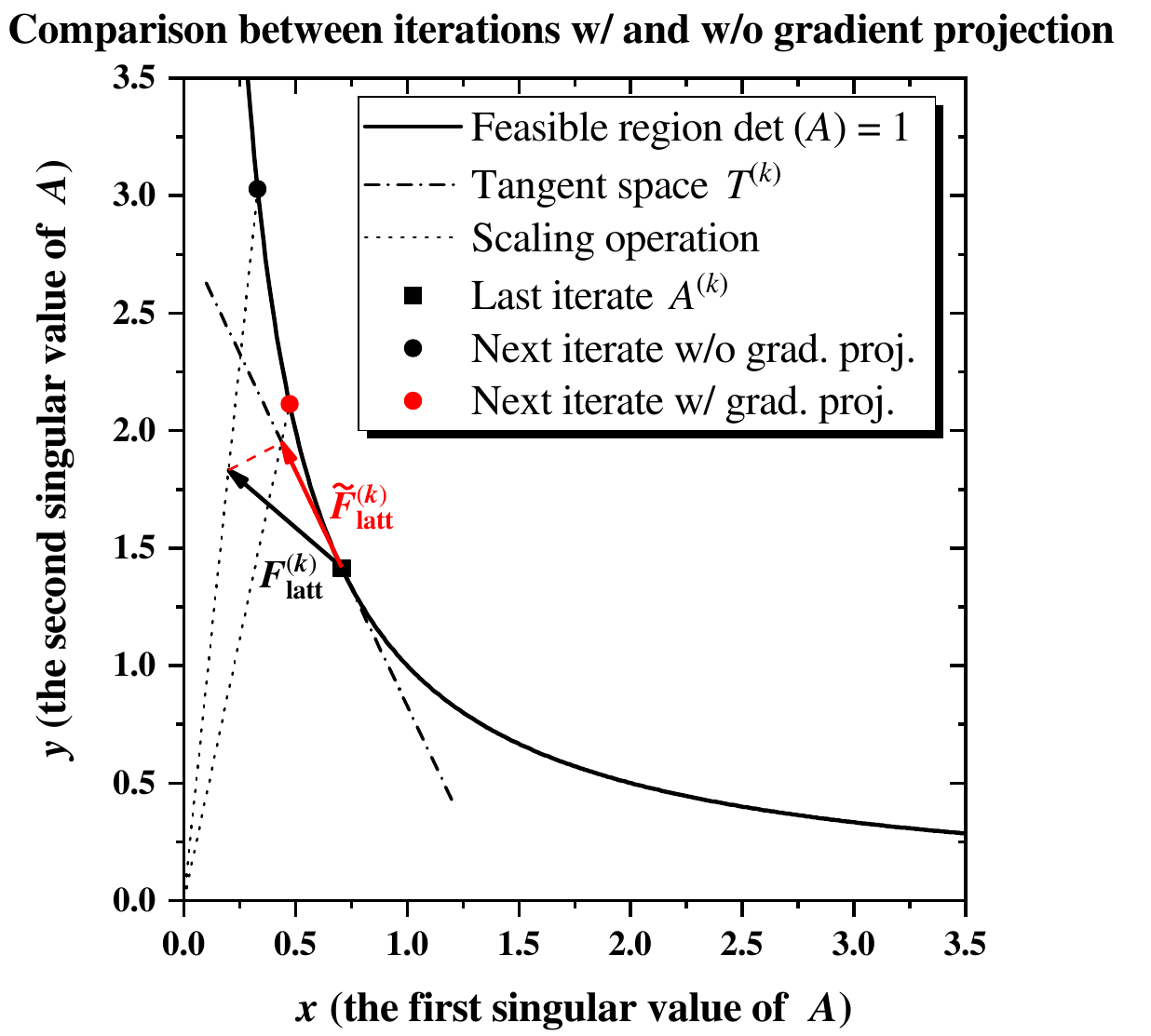}\quad\quad
		\includegraphics[width=.445\linewidth]{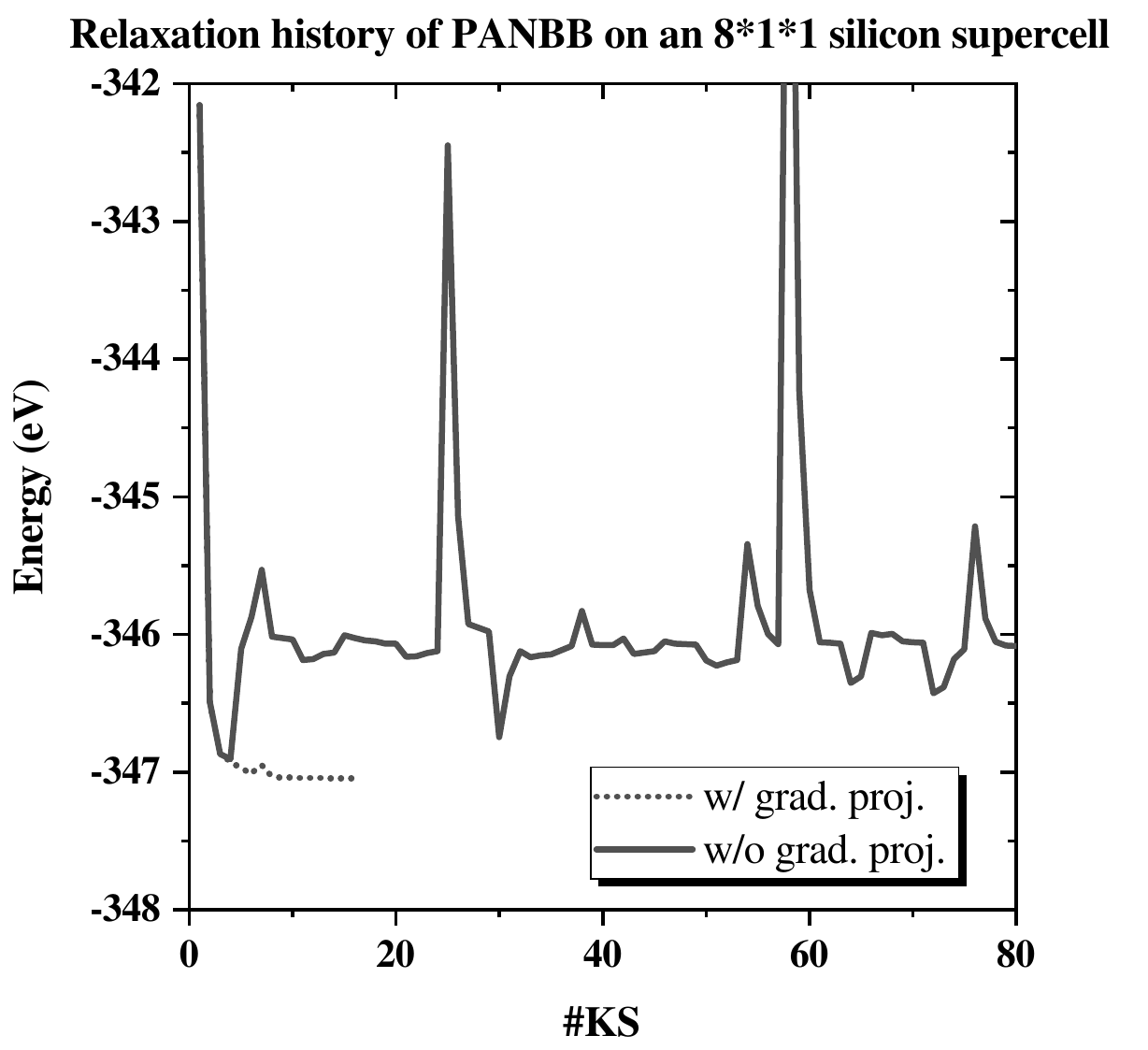}
		\caption{Left: a comparison between iterations with and without the gradient projection \eqref{eqn:lattice force projection} in two-dimensional context. The $x$- and $y$-axes refer to the first and second singular values of lattice matrix $A$, respectively. The black solid line stands for the feasible region $\det(A)=1$, the black square denotes the last iterate $A^{(k)}$, and the black dashdotted line refers to the tangent space $\calT^{(k)}$ of feasible region at $A^{(k)}$. The black and red arrows represent respectively the lattice force $F_{\latt}^{(k)}$ and its projection onto $\calT^{(k)}$, namely, $\tilde{F}_{\latt}^{(k)}$. The black and red circles show the next iterates without and with the gradient projection \eqref{eqn:lattice force projection}, respectively, after performing the scaling operation indicated by the black dotted lines. Right: the relaxation history of PANBB on an $8\times1\times1$ silicon supercell (64 atoms), where ``\#KS'' denotes the number of solving the KS equations, ``w/ grad. proj.'' and ``w/o grad. proj.'' stand for the results using $\tilde F_{\latt}^{(k)}$ and $F_{\latt}^{(k)}$, respectively.}
		\label{fig:cg full test}
	\end{figure*}
	For an illustration in two-dimensional context, see the left panel of Fig. \ref{fig:cg full test} where one can observe a large distance between the last iterate $A^{(k)}$ (denoted by the black square) and the next iterate using $F_{\latt}^{(k)}$ (denoted by the black circle). 
	
	\par Instead, we search along the tangent lines of the feasible region at the current configuration. In mathematics, this amounts to projecting the lattice force $F_{\latt}^{(k)}$ onto the tangent space $\calT^{(k)}\subseteq\R^{3\times3}$ \cite{nocedal2006numerical} of $\calF:=\{A\in\R^{3\times3}:\det(A)=V\}$ at $A^{(k)}$, where
	\begin{equation}
		\calT^{(k)}:=\lrbrace{D\in\R^{3\times3}:\inner{B^{(k)},D}=0}.
		\label{eqn:local linear approx}
	\end{equation}
	The projection favors a closed-form expression:
	\begin{equation}
		\aaFlak:=F_{\latt}^{(k)}-\frac{\inner{B^{(k)},F_{\latt}^{(k)}}}{\snorm{B^{(k)}}_\F^2}B^{(k)}.
		\label{eqn:lattice force projection}
	\end{equation}
	We then take $D_{\latt}^{(k)}:=\aaFlak$ for $k\ge0$. A comparison between the iterations with and without the gradient projection \eqref{eqn:lattice force projection} can be found in the left panel of Fig. \ref{fig:cg full test}. After projecting $F_{\latt}^{(k)}$ (denoted by the black arrow) onto $\calT^{(k)}$, we obtain $\tilde{F}_{\latt}^{(k)}$ (denoted by the red arrow) and then an iterate closer to $A^{(k)}$ (denoted by the red circle).
	
	\par It can be shown theoretically that the gradient projection \eqref{eqn:lattice force projection} eliminates the potential first-order energy increment induced by the scaling operation (S4 in Fig. \ref{fig:existing methods flowchart}). We shall point out that the strategy does not necessarily apply to general nonconvex constrained optimization problems. The desired theoretical results are achieved by fully exploiting the algebraic characterization of the tangent space \eqref{eqn:local linear approx}. For technical details, please refer to the proof of Lemma \ref{lem:successfully generation} in Appendix \ref{appsec:convergence analysis}. We also demonstrate the merit of the gradient projection \eqref{eqn:lattice force projection} through numerical comparisons; see the right panel of Fig. \ref{fig:cg full test}. 
	
	\subsection{Initial Trial Step Sizes}
	
	\par Efficient initial trial step sizes are essential for improving the overall performance. Moreover, our extensive numerical tests reveal that the relative displacements in the lattice vectors are typically much smaller than those in the atomic positions during relaxation, particularly for large systems. This observation underscores the substantial differences in the local curvatures of PES with respect to the atomic positions and lattice vectors, or in other words, ill conditions underlying the relaxation tasks. In light of this, we shall treat the atomic and lattice components separately and leverage their respective curvature information.
	
	\begin{figure*}[!t]
		\centering
		\includegraphics[width=.45\linewidth]{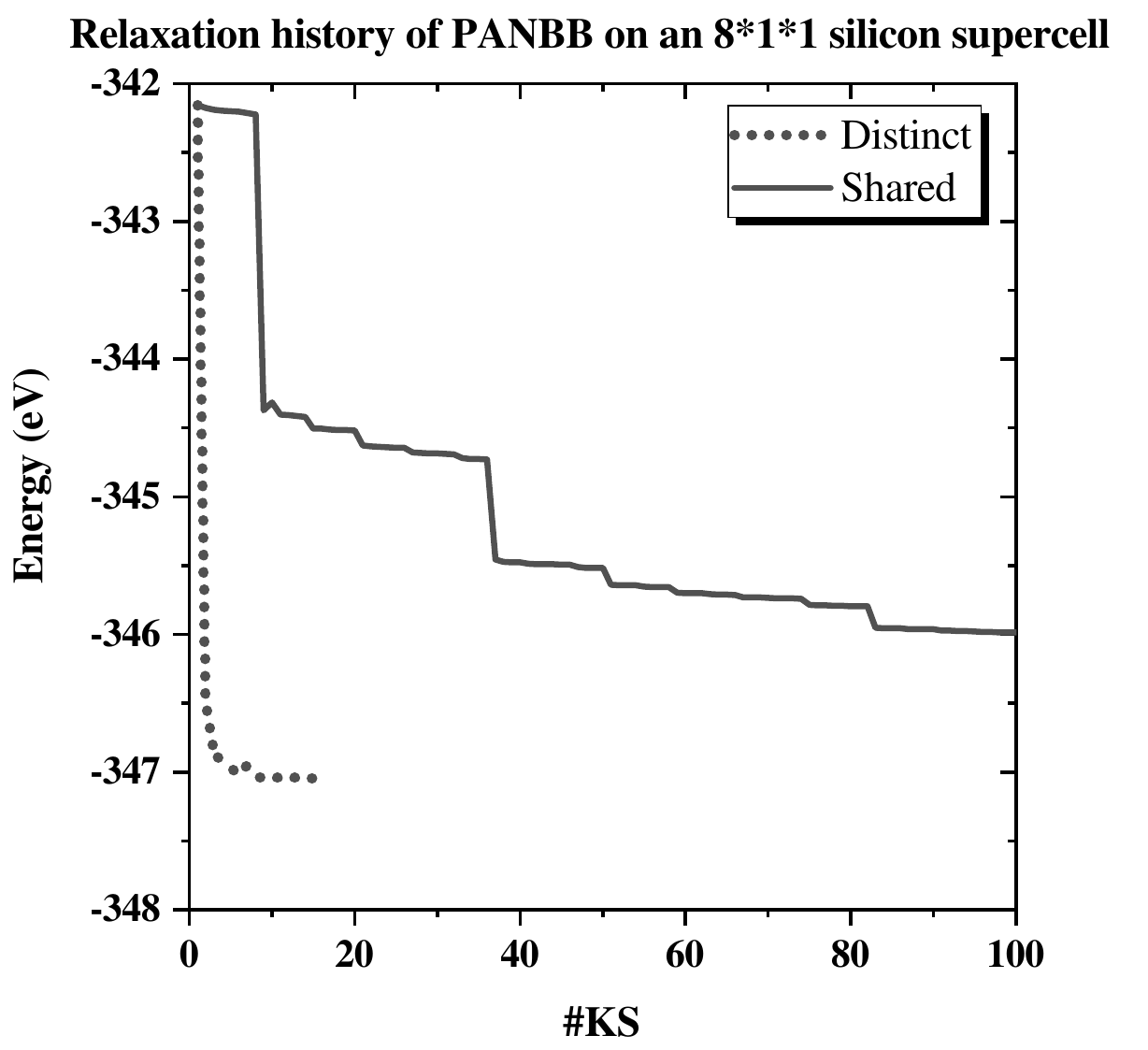}\quad\quad
		\includegraphics[width=.375\linewidth]{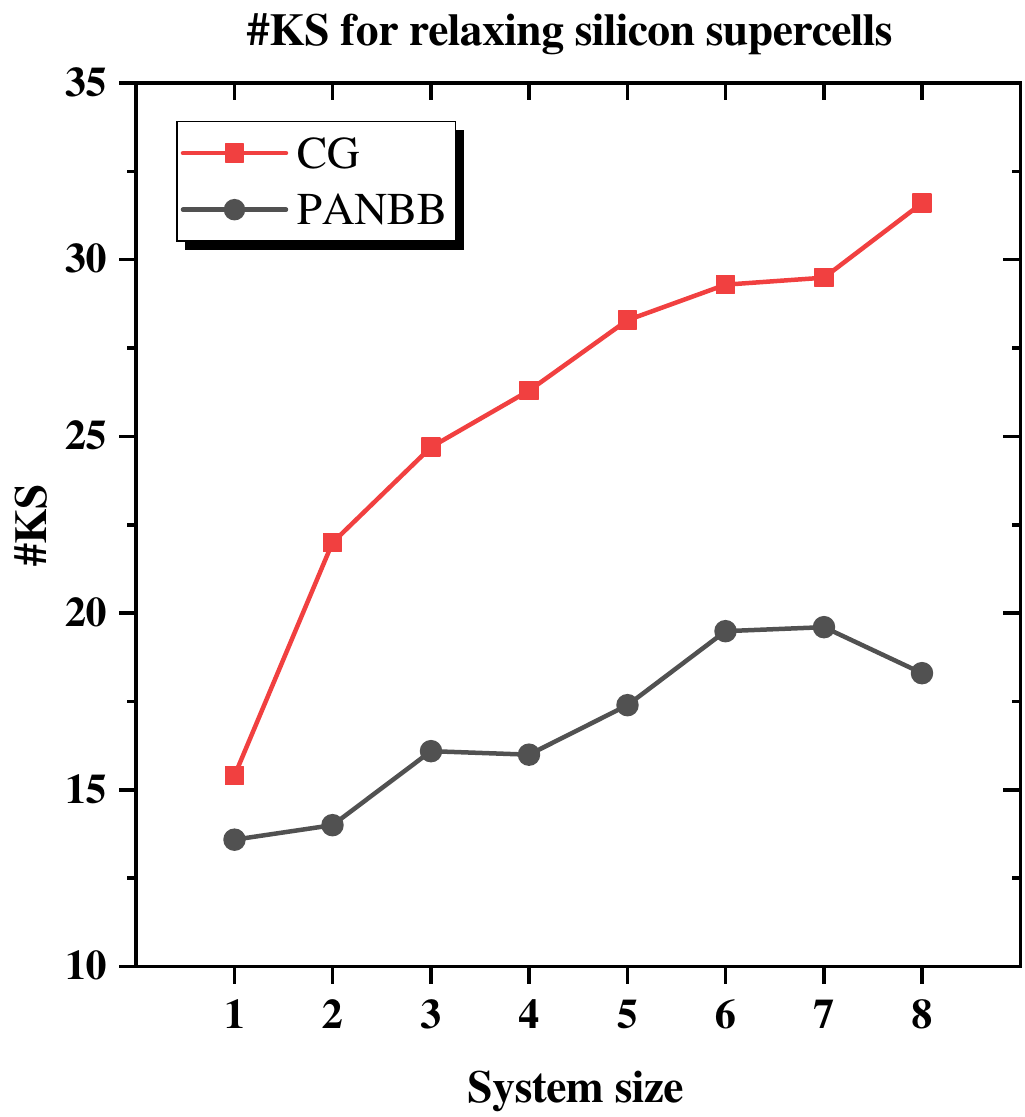}
		\caption{The relaxation on silicon supercells ($8\sim64$ atoms). Left: the relaxation history of PANBB on an $8\times1\times1$ silicon supercell. ``Distinct'' and ``Shared'' stand for the results of PANBB using distinct and shared step sizes for atomic and lattice components, respectively. Right: the average \#KS by CG and PANBB for relaxing silicon supercells of different sizes. Each system size is associated with 10 initial configurations, which are generated by random perturbations from equilibrium states; the deviations in the atomic positions are less than $0.1$ \angstrom~and the deviations in the lattice vectors are less than $0.02$ \angstrom.}
		\label{fig:block compare}
	\end{figure*}
	
	\par Motivated by the above discussions, we introduce curvature-aware step sizes for the atomic and lattice components, respectively. For the atomic part, we adopt the alternating Barzilai-Borwein (ABB) step sizes \cite{barzilai1988two,dai2005projected,hu2022force}, defined as
	\begin{equation}
		\hspace{-1mm}\alpha_{\atom}^{(k-1,0)}=\alpha_{\atom,\abb}^{(k)}:=\left\{\begin{array}{ll}
			\alpha_{\atom,\bb1}^{(k)}, & \text{if}~\mathrm{mod}(k,2)=0;\\
			\alpha_{\atom,\bb2}^{(k)}, & \text{if}~\mathrm{mod}(k,2)=1
		\end{array}\right.
		\label{eqn:ABB atom}
	\end{equation}
	for $k\ge1$, where 
	$$\alpha_{\atom,\bb1}^{(k)}:=\frac{\snorm{S_{\atom}^{(k-1)}}_\F^2}{\inner{S_{\atom}^{(k-1)},Y_{\atom}^{(k-1)}}},~\alpha_{\atom,\bb2}^{(k)}:=\frac{\inner{S_{\atom}^{(k-1)},Y_{\atom}^{(k-1)}}}{ \snorm{Y_{\atom}^{(k-1)}}_\F^2},$$
	$S_{\atom}^{(k-1)}:=R^{(k)}-R^{(k-1)}$, and $Y_{\atom}^{(k-1)}:=F_{\atom}^{(k-1)}-F_{\atom}^{(k)}$. It is not difficult to verify that
	$$\begin{aligned}
		\alpha_{\atom,\bb1}^{(k)}&=\argmin_{\alpha}\norm{\alpha^{-1}S_{\atom}^{(k-1)}-Y_{\atom}^{(k-1)}}_\F,\\
		\alpha_{\atom,\bb2}^{(k)}&=\argmin_{\alpha}\norm{\alpha Y_{\atom}^{(k-1)}-S_{\atom}^{(k-1)}}_\F.
	\end{aligned}$$
	In other words, $\alpha_{\atom,\bb1}^{(k)}\mb{I}$ and $\alpha_{\atom,\bb2}^{(k)}\mb{I}$ approximate the inverse atomic Hessian of potential energy around $R^{(k)}$. The gradient descent with the above BB step sizes thus resembles the Newton iteration for solving the nonlinear equations $F_{\atom}=0$.
	
	\par Now let us derive the ABB step sizes for the lattice part. Due to the volume constraint, the equilibrium condition on the lattice becomes $\Sigma_{\dev}=0$, namely, the shear stress vanishes and the normal stress cancels out the external isotropic pressure. This is equivalent to $\tilde F_{\latt}=0$ in view of Eqs. \eqref{eqn:lattice force} and \eqref{eqn:lattice force projection}, provided that the atomic forces vanish. Recalling the Newton iteration for solving the nonlinear equations $\tilde F_{\latt}=0$, we obtain the following ABB step sizes for the lattice vectors
	\begin{equation}
		\alpha_{\latt}^{(k-1,0)}=\alpha_{\latt,\abb}^{(k)}:=\left\{\begin{array}{ll}
			\alpha_{\latt,\bb1}^{(k)}, & \text{if}~\mathrm{mod}(k,2)=0;\\
			\alpha_{\latt,\bb2}^{(k)}, & \text{if}~\mathrm{mod}(k,2)=1
		\end{array}\right.
		\label{eqn:ABB lattice}
	\end{equation}
	for $k\ge1$, where
	$$\alpha_{\latt,\bb1}^{(k)}:=\frac{\snorm{S_{\latt}^{(k-1)}}_\F^2}{\inner{S_{\latt}^{(k-1)},Y_{\latt}^{(k-1)}}},~\alpha_{\latt,\bb2}^{(k)}:=\frac{\inner{S_{\latt}^{(k-1)},Y_{\latt}^{(k-1)}}}{\snorm{Y_{\latt}^{(k-1)}}_\F^2},$$
	$S_{\latt}^{(k-1)}:=A^{(k)}-A^{(k-1)}$, and $Y_{\latt}^{(k-1)}:=\tilde F_{\latt}^{(k-1)}-\tilde F_{\latt}^{(k)}$. With these distinct step sizes, we update the trials as
	\begin{equation}
		\begin{aligned}
			R_{\trial}^{(k,\ell)}:&=R^{(k)}+\alpha_{\atom}^{(k-1,\ell)}D_{\atom}^{(k)},\\
			A_{\inter}^{(k,\ell)}:&=A^{(k)}+\alpha_{\latt}^{(k-1,\ell)}D_{\latt}^{(k)},\\
			A_{\trial}^{(k,\ell)}:&=\sqrt[3]{\frac{V}{\det\big(A_{\inter}^{(k,\ell)}\big)}}A_{\inter}^{(k,\ell)},
		\end{aligned}
		\quad\ell=0,1,\ldots.
		\label{eqn:block update}
	\end{equation}
	
	\par As shown in Fig. \ref{fig:block compare}, we demonstrate the benefits brought by the distinct ABB step sizes \eqref{eqn:ABB atom} and \eqref{eqn:ABB lattice} through the relaxation of silicon supercells. From the left panel, one can observe the considerable acceleration of the distinct version over the shared one. In the latter case, the ABB step sizes are shared and computed by the quantities from both the atomic and lattice components. In the right panel, we test our new 
	algorithm equipped with the distinct step sizes on the silicon supercells of different sizes ($1\times1\times1\sim8\times1\times1$). Compared with that of the CG built in off-the-shelf simulation software, the number of solving the KS equations of the new algorithm increases slowly as the system size grows, which indicates a preconditioning-like effect of the our strategy. 
	
	\subsection{LM Algorithm and Criterion}
	
	\par With the search directions in place, the LM algorithm starts from the initial step sizes and attempts to locate an appropriate trial configuration that meets certain LM criterion. Capturing the local PES information, however, the ABB step sizes can occasionally incur small increments on energies \cite{hu2022force}. Extensive numerical simulations suggest that keeping the ABB step sizes intact is more beneficial \cite{fletcher2005barzilai,hu2022force}. The monotone LM criterion (S5 in Fig. \ref{fig:existing methods flowchart}) adopted in the widely used LM-based methods will lead to unnecessary computational expenditure and undermine the merits of the ABB step sizes. Notably, the authors of \cite{hu2022force} propose a nonmonotone criterion which encompasses only atomic degress of freedom.
	
	\par In the following, a nonrestrictive nonmonotone LM criterion is presented for both atomic and lattice degrees of freedom and the simple backtracking is taken as the LM algorithm. Specifically, the nonmonotone criterion accepts the trial configuration if
	\begin{equation}
		E_{\trial}^{(k,\ell)}\le \bar E^{(k)}-\eta\big(\alpha_{\atom}^{(k-1,\ell)}\snorm{F_{\atom}^{(k)}}_\F^2+\alpha_{\latt}^{(k-1,\ell)}\snorm{\aaFlak}_\F^2\big),
		\label{eqn:nonmonotone termination criterion}
	\end{equation}
	where $E_{\trial}^{(k,\ell)}:=E(\Rtkl,\Atkl)$, $\eta\in(0,1)$ is a constant, and $\{\bar E^{(k)}\}$ is a surrogate sequence, defined recursively as
	\begin{equation}
		\begin{aligned}
			\bar E^{(k+1)}:&=\frac{\bar E^{(k)}+\mu^{(k)}q^{(k)}E^{(k+1)}}{1+\mu^{(k)}q^{(k)}},\\
			q^{(k+1)}:&=\mu^{(k)}q^{(k)}+1,
		\end{aligned}
		\label{eqn:recursion}
	\end{equation}
	with $\bar E^{(0)}:=E^{(0)}$, $q^{(0)}:=1$, and $\mu^{(k)}\in[0,1]$. The existence of $\alpha_{\atom}^{(k-1,\ell)}$ and $\alpha_{\latt}^{(k-1,\ell)}$ satisfying Eq. \eqref{eqn:nonmonotone termination criterion} is proved in Lemma \ref{lem:successfully generation} of Appendix \ref{appsec:convergence analysis}. It is not hard to verify that $\bar E^{(k)}$ is a weighted average of the past energies $\{E^{(j)}\}_{j=0}^k$ and is always larger than $E^{(k)}$. Hence, small energy increments are tolerable. That being said, the criterion \eqref{eqn:nonmonotone termination criterion} is sufficient to ensure the convergence of the entire procedure; see Theorem \ref{thm:convergence} in Appendix \ref{appsec:convergence analysis}.
	
	\par By virtue of the nonmonotone nature of the criterion \eqref{eqn:nonmonotone termination criterion}, we take the simple backtracking as the LM algorithm. Whenever the last trial configuration is rejected, we set
	\begin{equation}
		\alpha_{\atom}^{(k-1,\ell)}:=\alpha_{\atom}^{(k-1,\ell-1)}\delta_{\atom},~
		\alpha_{\latt}^{(k-1,\ell)}:=\alpha_{\latt}^{(k-1,\ell-1)}\delta_{\latt}
		\label{eqn:backtracking}
	\end{equation}
	where $\delta_{\atom}$, $\delta_{\latt}\in(0,1)$ are constants. Without carefully selected $\delta_{\atom}$ and $\delta_{\latt}$, we observe in most simulations that the ABB step sizes can directly meet the nonrestrictive criterion \eqref{eqn:nonmonotone termination criterion} and the rejected trials account for approximately 1.8\% of computational costs, in stark contrast to approximately 58.8\% in the widely used CG. 
	
	\subsection{Implementation Details}
	
	\par With the aforementioned developments, we have devised a projected gradient descent algorithm called PANBB for crystal structure relaxation under a fixed unit cell volume. More specifically, for the initial trial step sizes, we set $\alpha_{\atom}^{(-1,0)}=4.8\times10^{-2}$ \angstrom$^2$/eV, $\alpha_{\latt}^{(-1,0)}=10^{-6}$ \angstrom$^2$/eV during the first iteration \footnote{The specific initial step size for the atomic part is also used by CG realized in \textsc{vasp} and \textsc{cessp}. The numerical results of CG and PANBB will become more comparable. A small initial step size is chosen for the lattice part to prevent unexpected breakdown. In simulations, the performance of PANBB are not sensitive to the two values.} and compute in the subsequent iterations the truncated absolute values of the ABB step sizes:
	\begin{subequations}
		\begin{equation}
			\alpha_{\atom}^{(k-1,0)}=\max\lrbrace{\min\lrbrace{\abs{\alpha_{\atom,\abb}^{(k)}},\tau_{\atom}^{(k)},10},10^{-5}},
			\label{eqn:truncation atom}
		\end{equation}
		\begin{equation}
			\alpha_{\latt}^{(k-1,0)}=\max\lrbrace{\min\lrbrace{\abs{\alpha_{\latt,\abb}^{(k)}},\tau_{\latt}^{(k)},0.1},10^{-7}},
			\label{eqn:truncation latt}
		\end{equation}
		\label{eqn:truncation}
	\end{subequations}
	where 
	\begin{equation*}
		\begin{aligned}
			\tau_{\atom}^{(k)}:&=\gamma_{\atom}^{(k)}\max\lrbrace{-\log_{10}\big(\snorm{F_{\atom}^{(k)}}_\F/N\big),1},\\
			\tau_{\latt}^{(k)}:&=\gamma_{\latt}^{(k)}\max\lrbrace{-\log_{10}\big(\snorm{\tilde F_{\latt}^{(k)}}_\F/N\big),1}.
		\end{aligned}
	\end{equation*}
	The factors $\gamma_{\atom}^{(k)}$, $\gamma_{\latt}^{(k)}$ take respectively the values of 1, $10^{-3}$ for $k=0$ and later are modified according to the previous iterations: if they appear to be stringent from the recent history for $\alpha_{\atom}^{(k-1,0)}$ and $\alpha_{\latt}^{(k-1,0)}$ to pass the criterion \eqref{eqn:nonmonotone termination criterion}, we increase them; otherwise, we decrease them. For instance, 
	$$\gamma_{\atom}^{(k)}:=\left\{\begin{array}{ll}
		2\gamma_{\atom}^{(k-1)}, & \text{Case I};\\
		0.5\gamma_{\atom}^{(k-1)}, & \text{Case II};\\
		\gamma_{\atom}^{(k-1)}, & \text{otherwise},
	\end{array}\right.$$
	where
	\begin{itemize}
		\item {\bf Case I:} the truncation has been active but $\alpha_{\atom}^{(k-1,0)}$ has directly passed the criterion \eqref{eqn:nonmonotone termination criterion} for twice in the past $\min\{k-\hat k,20\}$ iterations;
		
		\item {\bf Case II:} $\alpha_{\atom}^{(k-1,0)}$ has failed to pass the criterion \eqref{eqn:nonmonotone termination criterion} for twice in the past $\min\{k-\hat k,20\}$ iterations.
	\end{itemize}
	Initially, we set $\hat k:=0$; whenever the factor $\gamma_{\atom}^{(k)}$ gets changed, we set $\hat k:=k$. The update on $\gamma_{\latt}^{(k)}$ follows similar rules. Note that the box boundaries in Eq. \eqref{eqn:truncation}, such as $10^{-5}$ and 10 for $\alpha_{\atom}^{(k-1,0)}$, are present for the ease of theoretical analyses. In practice, these boundaries are rarely touched. 
	The criterion \eqref{eqn:nonmonotone termination criterion} is equipped with $\eta=10^{-4}$ and $\mu^{(k)}\equiv0.05$ following \cite{hu2022force}. We take $\delta_{\atom}=0.1$ and $\delta_{\latt}=0.5$ for the backtracking \eqref{eqn:backtracking} \footnote{We use a small $\delta_{\atom}$ to end the LM process more quickly and help the ensuing ABB step sizes learn local curvatures better. The value of $\delta_{\latt}$ is not so critical, since usually $\alpha_{\latt}^{(k-1,0)}$ is already sufficiently small.}. Under mild conditions, we establish the convergence of PANBB to equilibrium states; see Theorem \ref{thm:convergence} in Appendix \ref{appsec:convergence analysis}. 
	
	\section{Numerical Simulations}
	
	\subsection{Experimental Settings}
	
	\par We have implemented PANBB in the in-house plane-wave code \textsc{cessp} \cite{fang2016on,gao2017parallel,zhou2018applicability,fang2019implementation}. The exchange-correlation energy is described by the generalized gradient approximation \cite{perdew1996generalized}. Electron-ion interactions are treated with the projector augmented-wave (PAW) potentials based on the open-source \textsc{abinit} Jollet-Torrent-Holzwarth data set library in the PAW-XML format \cite{jollet2014generation}. The total energies are calculated using the Monkhorst-Pack mesh \cite{monkhorst1976special} with a $k$-mesh spacing of $0.12\sim0.20$ \angstrom$^{-1}$. The typical plane-wave energy cutoffs are around $500\sim600$ eV. The KS equations are solved by the preconditioned self-consistent field (SCF) iteration \cite{zhou2018applicability}. 
	
	\begin{figure*}[!t]
		\centering
		\includegraphics[width=.45\linewidth]{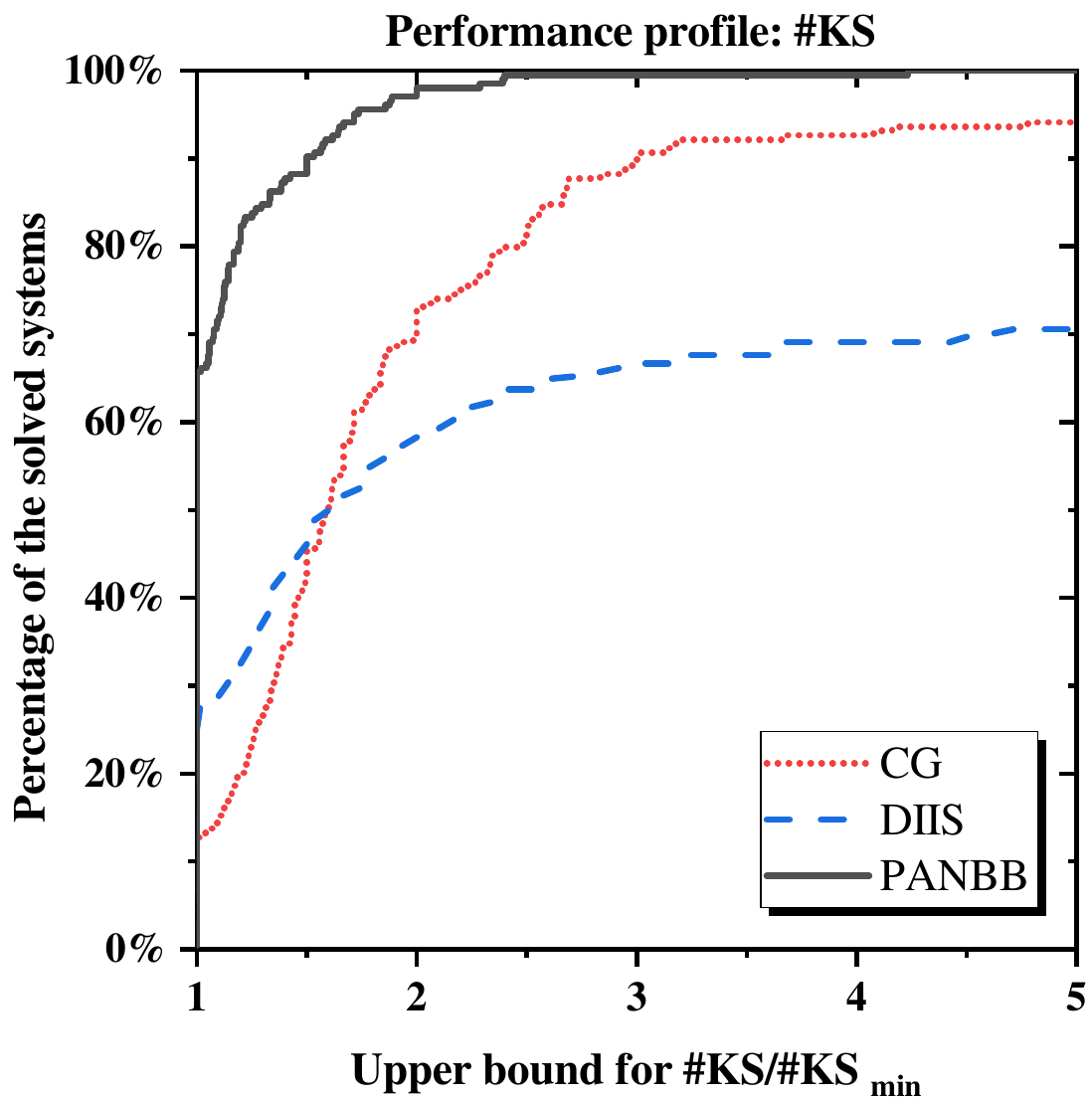}\quad\quad
		\includegraphics[width=.45\linewidth]{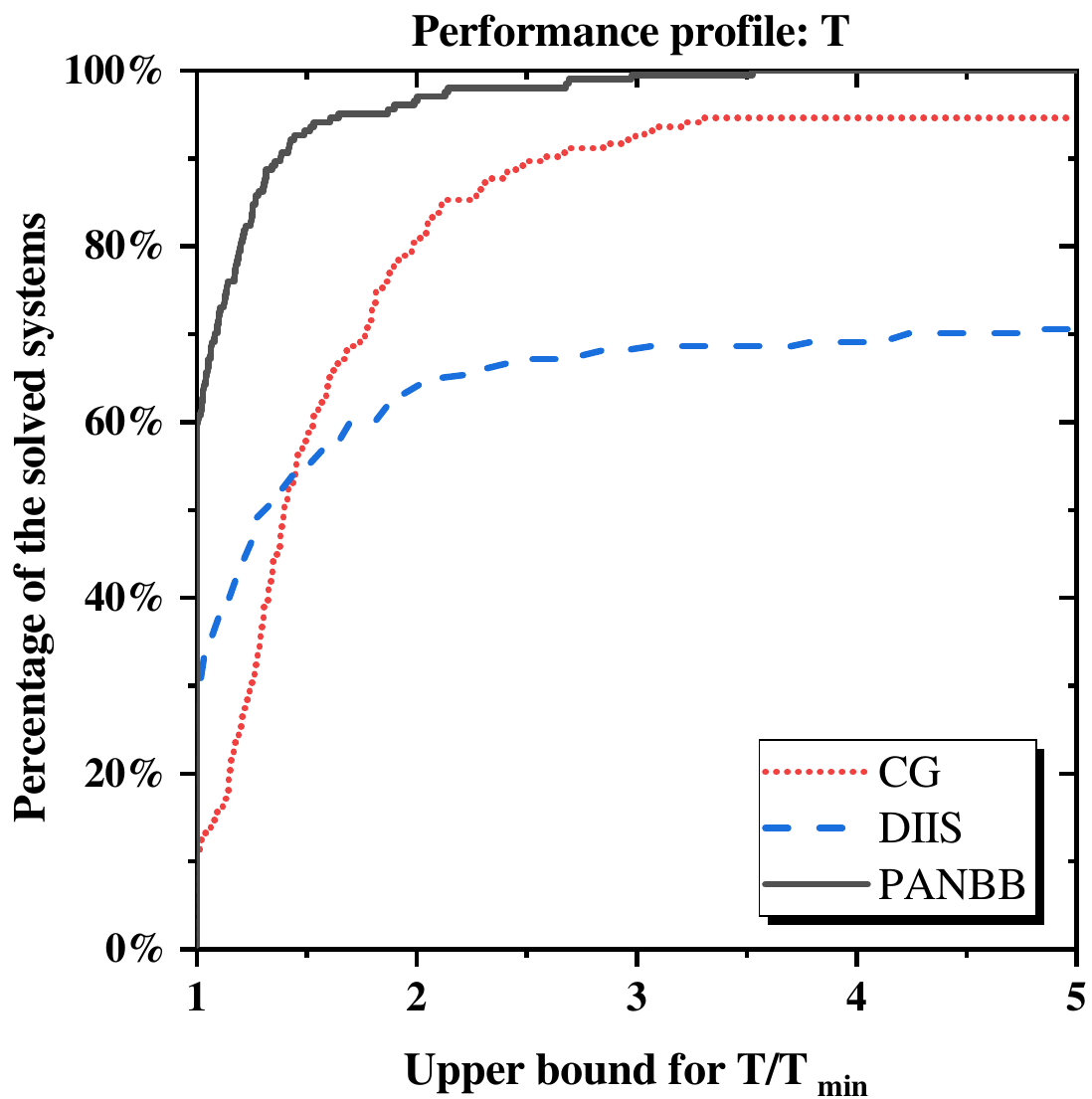}
		\caption{The performance profiles of CG, DIIS, and PANBB, where \#KS$_{\min}$ and T$_{\min}$ denote the minimum \#KS and T required by the three methods on a given system, respectively. To reach a fair comparison and illustrate the algorithm robustness at the same time, we only retain (i) the systems where the difference between the converged energies obtained by each pair of methods does not exceed 3 meV per atom; (ii) the systems on which at least one method diverges or fails to converge before \#KS exceeds 1000, and set the associated \#KS and T to $\infty$. The final number of the systems used for this figure is 204. Left: \#KS. Right: T. } 
		\label{fig:perf}
	\end{figure*}
	
	\par In addition to PANBB, we test the performance of the CG and DIIS built in \textsc{cessp}. Notably, compared with the one documented in textbook, the CG in test uses deviatoric stress tensors in search direction updates for the fixed-volume relaxation. We terminate all the three methods whenever the number of solving the Kohn-Sham equations (abbreviated as \#KS) reaches 1000 or the maximum atomic force and the maximum stress components divided by $N$ fall below 0.01 eV/\angstrom. The convergence criterion for the SCF iteration is $10^{-5}$ eV. To evaluate the performance of the three methods, we record the \#KS and running time in seconds (abbreviated as T) required for fulfilling the force tolerance.
	
	\subsection{Benchmark Test}
	
	\begin{figure*}[!t]
		\centering
		\includegraphics[width=.47\linewidth]{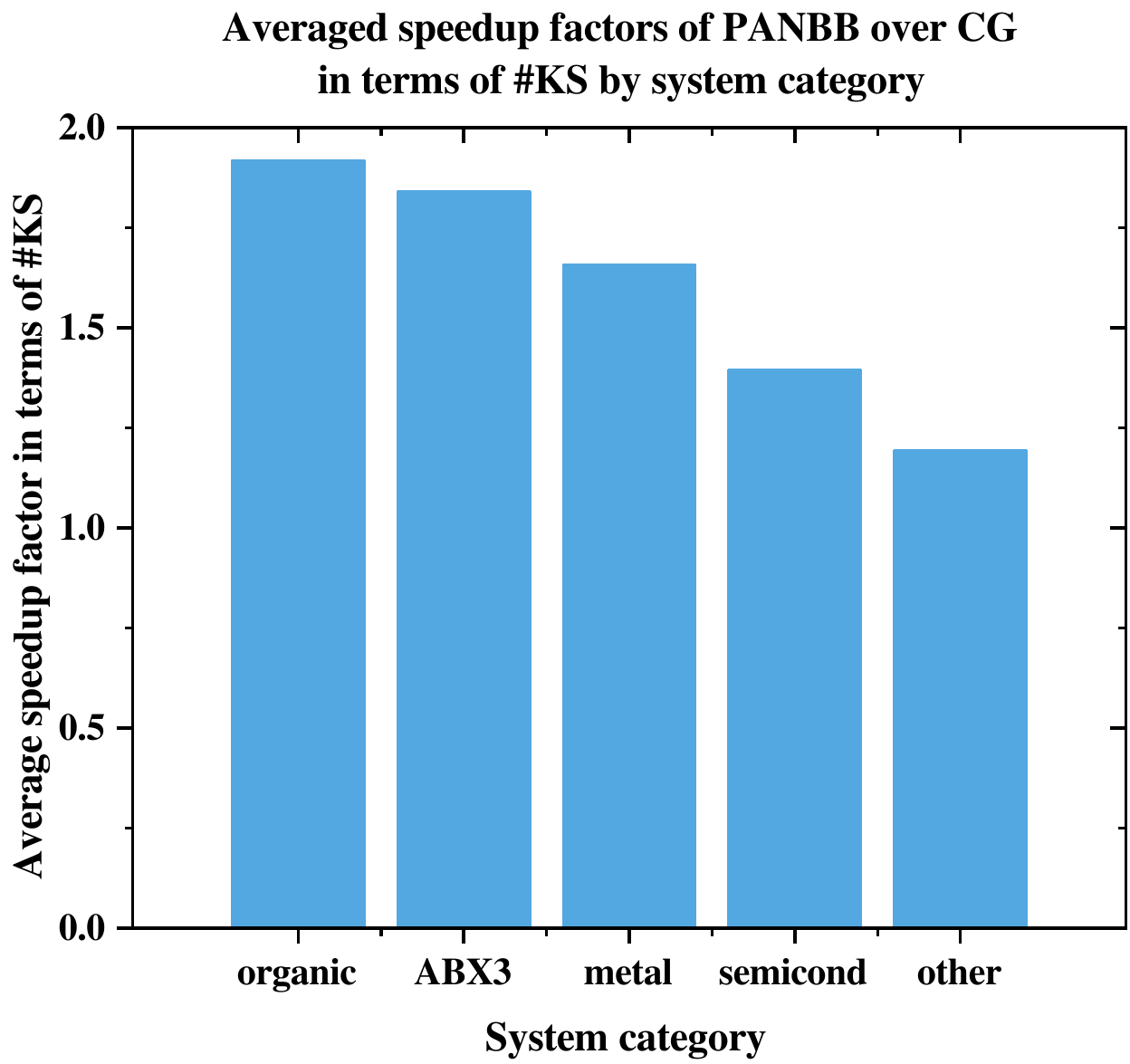}\quad\quad
		\includegraphics[width=.47\linewidth]{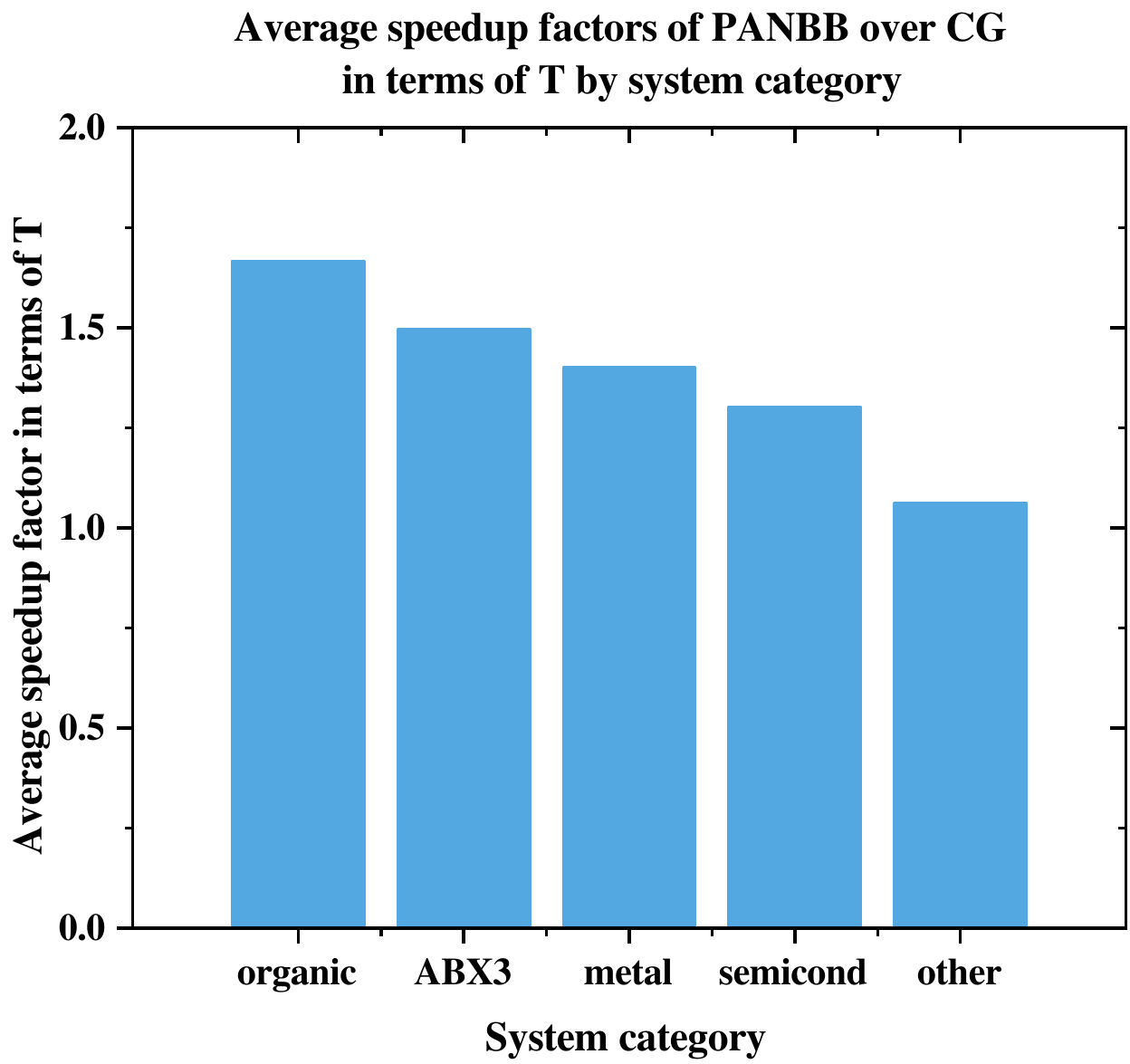}
		\caption{The average speedup factors of PANBB over CG by the system category. To achieve a fair comparison, we only include the systems where both PANBB and CG converge normally and the converged energy differences per atom are not larger than 3 meV. Categories whose numbers of the included systems are less than 5 are merged together into ``other''. Left: \#KS. Right: T.}
		\label{fig:speedup category}
	\end{figure*}
	
	\par We have performed numerical comparisons among CG, DIIS, and PANBB on a benchmark set consisting of 223 systems from various categories, such as metallic systems, organic molecules, semiconductors, $ABX_3$ perovskites, heterostructures, and 2D materials. About 68.6\% of them are metallic systems, for which the EOS calculations are more demanding. The number of atoms in one system ranges from 2 to 215. Some of them are available from the Materials Project \cite{Jain2013} and the Organic Materials Database \cite{borysov2017organic}. In the beginning, these systems are at their ideal configurations for defects, heterostructures, and substitutional alloys, or simply place a molecule on top of a surface. More information about the benchmark set can be found in the Supplemental Material. 
	
	\par To provide a comprehensive comparison among CG, DIIS, and PANBB over the benchmark set, we adopt the performance profile \cite{dolan2002benchmarking} as shown in Fig. \ref{fig:perf}, where \#KS$_{\min}$ and T$_{\min}$ represent the minimum \#KS and T required by the three methods on a given system, respectively. For example, in the right panel of Fig. \ref{fig:perf}, the intercepts of the three curves indicate that CG, DIIS, and PANBB are the fastest on approximately 11.3\%, 28.9\%, and 59.8\% of all the systems, respectively. The $y$-coordinate touched by the curve of PANBB when the $x$-coordinate equals two shows that the running time of PANBB is not twice larger than those of the other two methods on approximately 96.5\% of all the systems. Basically, a larger area below the curve implies better overall performance of the associated method. We can therefore conclude from Fig. \ref{fig:perf} that PANBB favors the best overall performance. 
	In addition, the Supplemental Material reveals that PANBB is respectively faster than CG and DIIS on approximately 85.2\% and 68.8\% of all the systems. The average speedup factors of PANBB over CG and DIIS, in terms of running time, are around 1.41 and 1.45, respectively. We also calculate the average speedup factors of PANBB over CG by the system category; see Fig. \ref{fig:speedup category} for an illustration. Among others, the acceleration brought by PANBB is the most evident when relaxing organic molecules, $ABX_3$ perovskites, and metallic systems, with speedup factors of 1.67, 1.50, and 1.40, respectively, in terms of running time. These statistics provide strong evidence for the universal efficiency superiority of PANBB. The robustness of PANBB also deserves a whistle. Thanks to its theoretical convergence guarantees (see Theorem \ref{thm:convergence} in Appendix \ref{appsec:convergence analysis}), PANBB converges consistently across the benchmark set, whereas CG fails on 11 systems due to the breakdown of the LM algorithm, and DIIS diverges on 56 systems.
	
	\subsection{Calculations of EOS on the AlCoCrFeNi HEA}
	
	\par As an application of crystal structure relaxation under a fixed unit cell volume, we employ PANBB to calculate the static EOS up to approximately 30 GPa for the body-centered cubic AlCoCrFeNi HEA. The local lattice distortion (LLD), induced mainly by the large atomic size mismatch of the alloy components, is one of
	the core effects responsible for the unprecedented mechanical
	behaviors of HEAs. The LLD in some equimolar HEAs under ambient pressure is studied by \ab~approach \cite{song2017local}. The volume-dependent LLD is investigated and is helpful for understanding the structural properties of HEAs under pressure. The thermodynamic properties have also been investigated in comparison with experimental data for validation. 
	
	\begin{figure*}[!t]
		\centering
		\includegraphics[trim=6mm 6mm 6mm 6mm, width=.4\columnwidth]{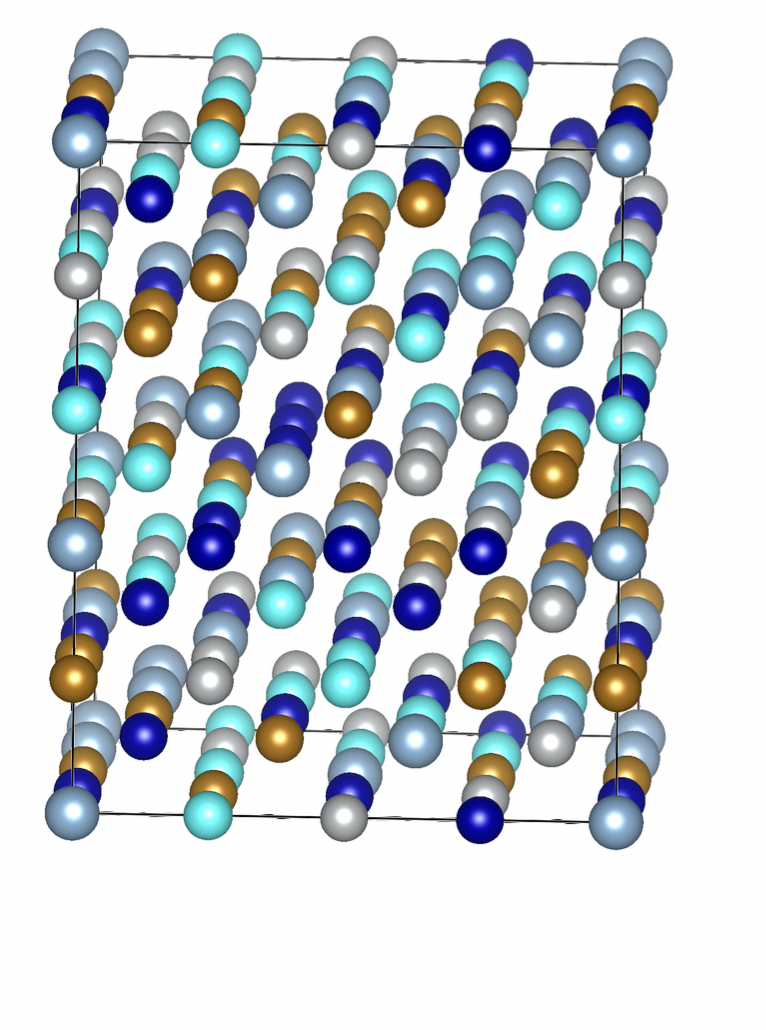}
		\includegraphics[trim=6mm 6mm 6mm 6mm, width=.4\columnwidth]{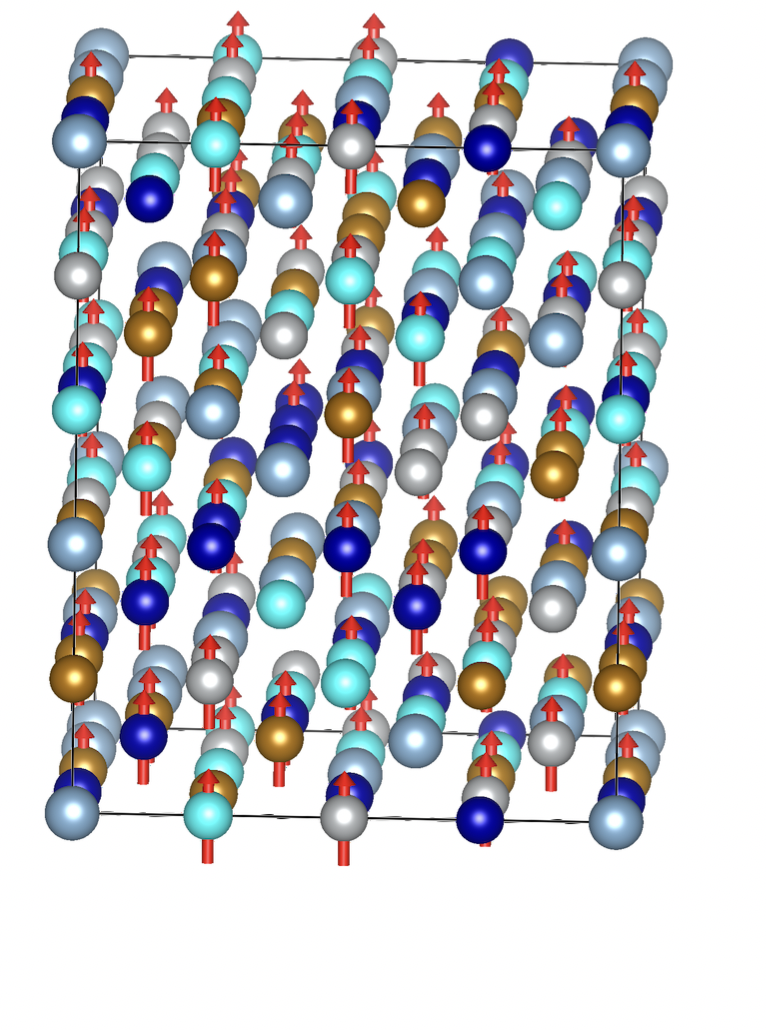}
		\includegraphics[trim=6mm 6mm 6mm 6mm, width=.4\columnwidth]{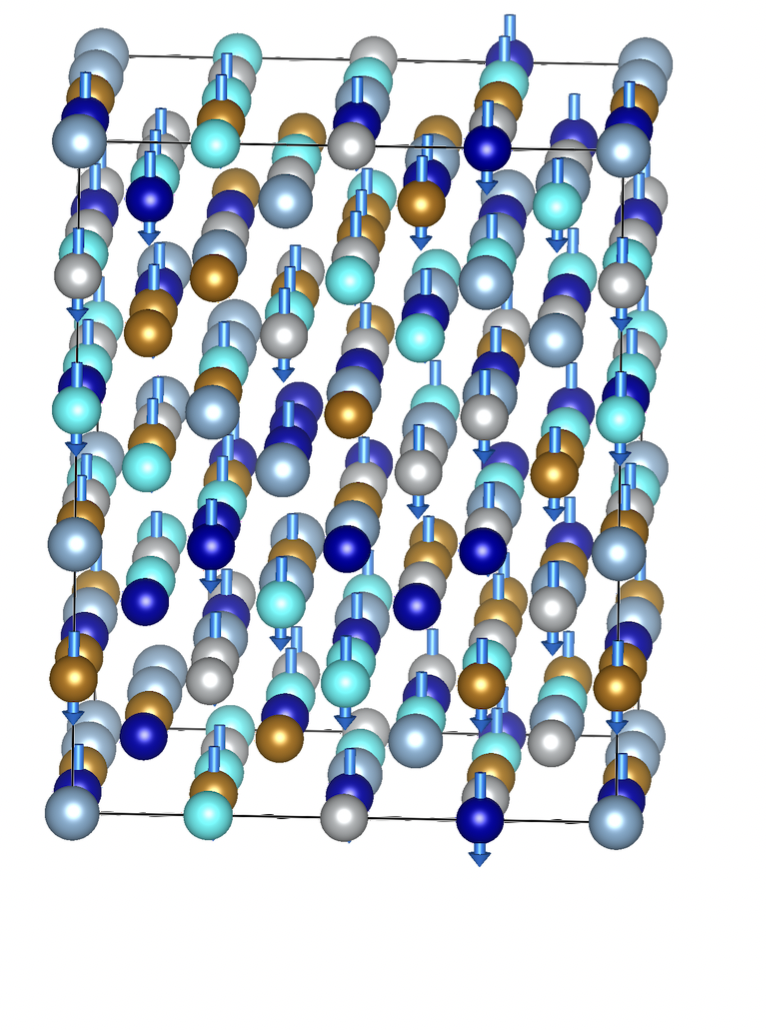}
		\includegraphics[trim=0mm 0mm 0mm 0mm, width=.65\columnwidth]{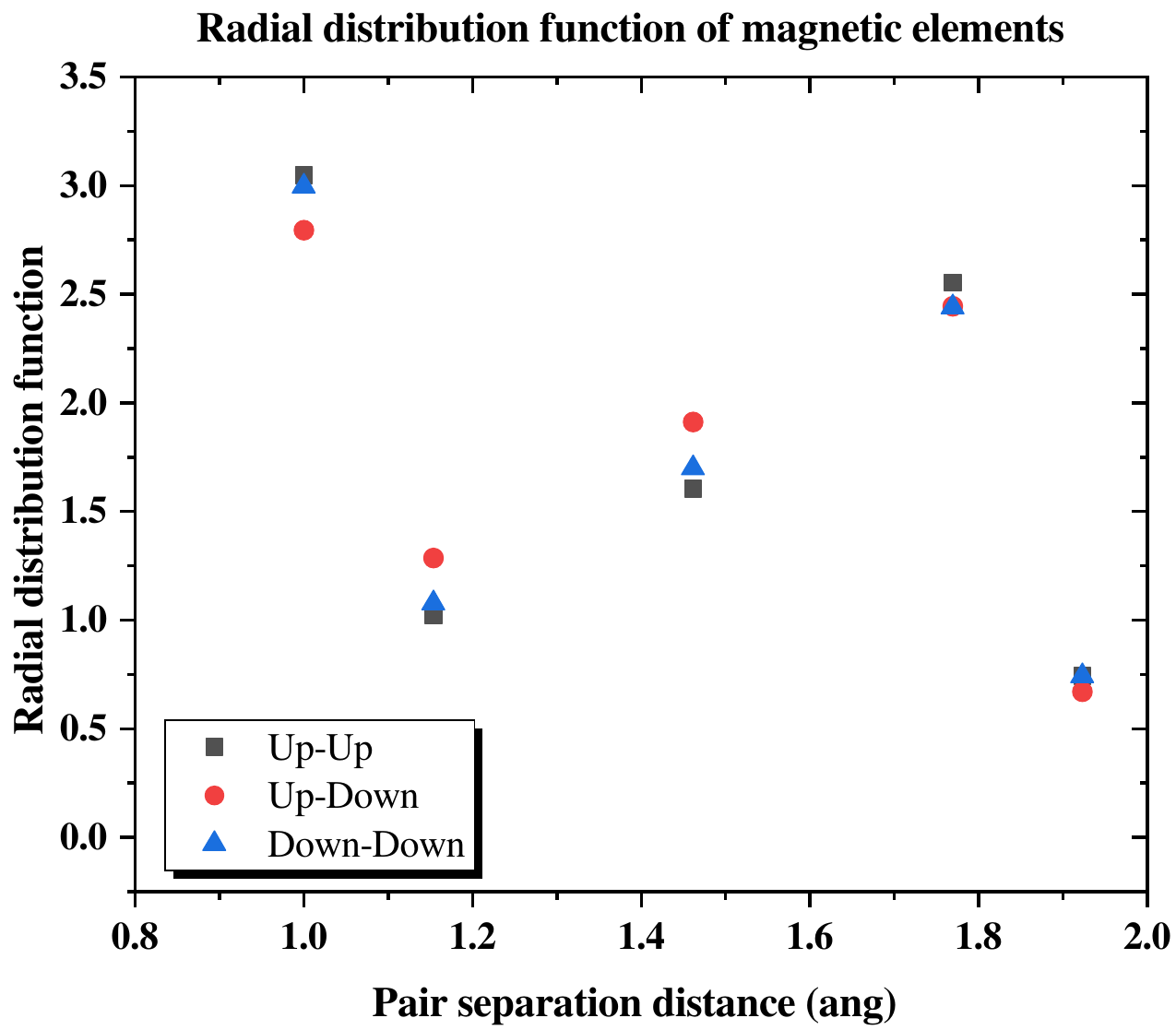}
		\caption{The generated atomic and magnetic configurations. From left to right: the atomic configuration generated by the SAE method, the 64 spin-up and 64 spin-down sites chosen by the integer programming approach for magnetic configuration, and the radial distribution function of the magnetic elements.}
		\label{fig:modeling}
	\end{figure*}
	
	\par To incorporate the temperature-induced magnetic structural transition, we need to describe both the ferromagnetic (FM) and paramagnetic (PM) states of AlCoCrFeNi. To characterize the inherent chemical and magnetic disorder, we combine the SAE method \cite{tian2020structural} and an integer programming approach \cite{conforti2014integer} to construct representative $4\times4\times5$ supercells with 160 atoms. Specifically, the SAE method is first invoked for chemical disorder modeling, where we seek to minimize 
	$$\sum_{d(C_2)<r_2^c}w_2f(C_2,\sigma)+\sum_{d(C_3)<r_3^c}w_3f(C_3,\sigma)$$
	with respect to the atomic occupation $\sigma$. Here $w_n$ ($n=2,3$) are appropriate weights and $r_n^c$ ($n=2,3$) are cutoff radii. The notations $f(C_2,\sigma)$ and $f(C_3,\sigma)$ represent the similarity functions associated with the diatomic clusters $C_2$ and triatomic clusters $C_3$, respectively. The corresponding diameter of the cluster is denoted by $d(C_2)$ or $d(C_3)$. We perform 90,000 Monte carlo samplings and minimize $f(C_2,\sigma)$ to approximately 0.06 and $f(C_3,\sigma)$ to approximately 0.18. 
	
	\par For magnetic disorder modeling, we refrain from using the SAE method again because it necessitates distinguishing spin-up and spin-down elements, leading to a higher-dimensional search space. Instead, we resort to an integer programming approach to distribute spin-up and spin-down sites uniformly in the configuration given by the SAE method, as a way to achieve the disordered local moment approximation \cite{gyorffy1985first}. We refer readers to Appendix \ref{appsec:modeling details} for more technical details. The generated atomic and magnetic configurations are illustrated in Fig. \ref{fig:modeling}. The rightmost panel displays the radial distribution function of the magnetic elements, demonstrating a proper description of the magnetic disorder by the integer programming approach.
	
	\par To calculate the static EOS for both the FM and PM states, we distribute 18 unit cell volumes in the range between 10.29 \angstrom$^3$/atom and 13.12 \angstrom$^3$/atom, correponding to the static pressures from approximately $29.4$ GPa to $-12.9$ GPa. For the FM state, we employ PANBB for the fixed-volume relaxation with colinear calculations. Following \cite{wu2020structural}, the PM state is then described based on the relaxed FM structures, with the initial magnetic configuration determined by the integer programming approach. The energy-volume relations on both the FM and PM AlCoCrFeNi are depicted in Fig. \ref{fig:EV FM PM AlCoCrFeNi}, 
	\begin{figure}[htb]
		\centering
		\includegraphics[width=\columnwidth]{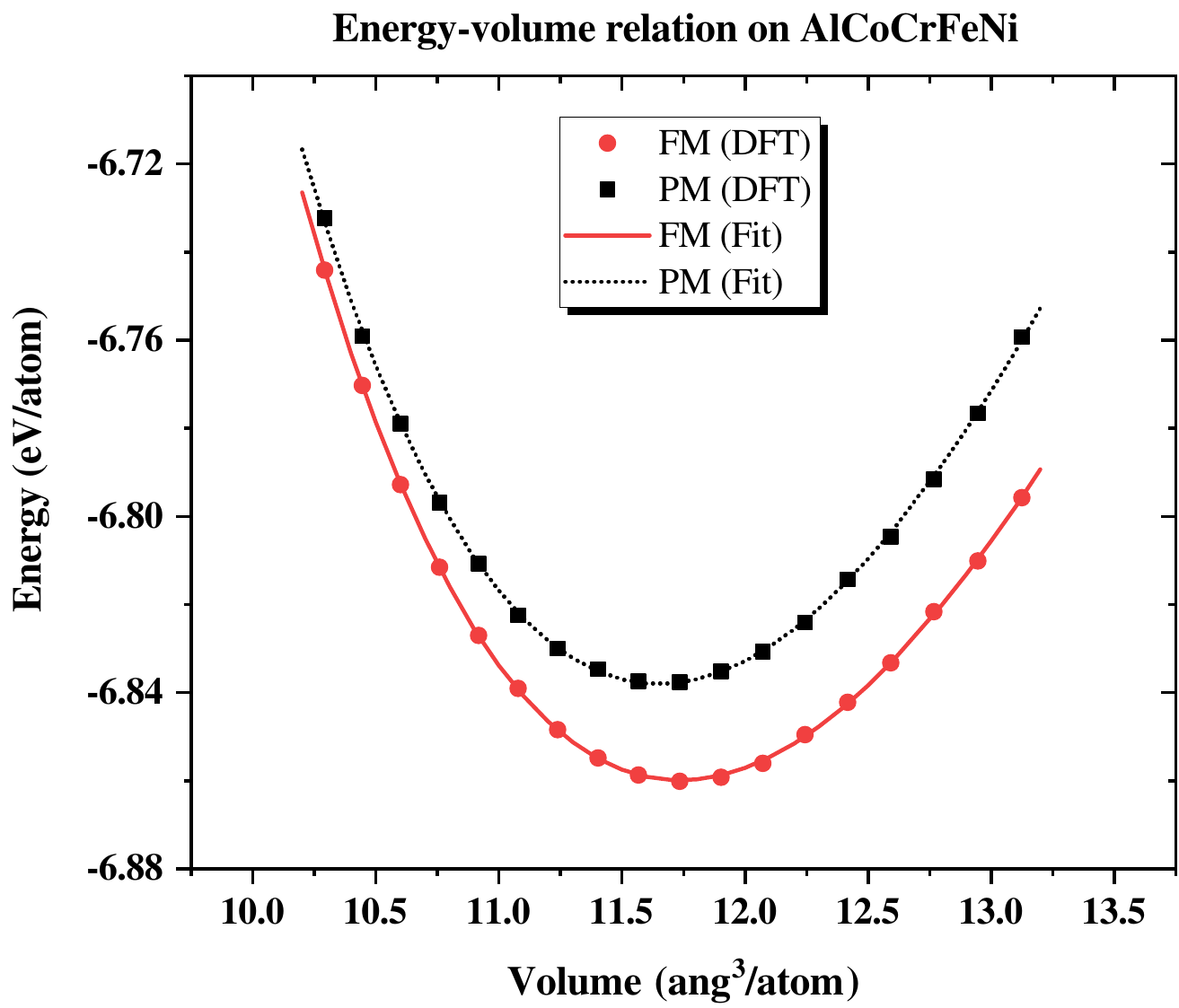}
		\caption{The energy-volume relations on FM and PM AlCoCrFeNi. ``DFT'' indicates the \ab~results, while ``Fit'' is for the curve fitted to the static BM3 EOS. The red circles and solid line represent the results for the FM state, while the black squares and dotted line represent the results for the PM state.}
		\label{fig:EV FM PM AlCoCrFeNi}
	\end{figure}
	along with the fitted third-order Birch-Murnaghan (BM3) EOS \cite{birch1947finite, otero2011gibbs2}. We record the fitted parameters in Table \ref{tab:fitted params}, where $V_0$ represents the equilibrium volume and $E_0$, $B_0$, and $B_0'$ denotes the energy, bulk modulus, and its derivative with respect to pressure associated with $V_0$, respectively.
	
	\begin{table}[htb]
		\centering
		\caption{The fitted parameters of the static BM3 EOS.}
		\label{tab:fitted params}
		\begin{tabular}{l|cccc}
			\toprule
			\multicolumn{1}{c|}{References} & $V_0$ (\angstrom$^3$/atom) & $E_0$ (eV/atom) & $B_0$ (GPa) & $B_0'$ \\\midrule
			FM (\cite{wu2020structural}) & 11.77 & -6.85 & 157.32 & 8.9 \\
			
			FM (PANBB) & 11.74 & -6.86 & 158.49 & 5.3 \\
			
			FM (CG) & 11.73 & -6.86 & 158.04 & 5.8 \\ \midrule
			
			PM (\cite{wu2020structural}) & 11.70 & -6.82 & 161.64 & 5.3 \\
			
			PM (PANBB) & 11.65 & -6.84 & 167.59 & 4.7 \\
			
			PM (CG) & 11.65 & -6.84 & 161.64 & 4.0 \\
			\bottomrule
		\end{tabular}
	\end{table}
	
	\par Compared with the previous calculations \cite{wu2020structural}, our fitted $V_0$ and $B_0$ for both the FM and PM states, $E_0$ for the FM state, and $B_0'$ for the PM state are comparable; see Table \ref{tab:fitted params}. The gaps in $B_0'$ for the FM state and $E_0$ for the PM state can result from two aspects: (i) we enhance magnetic disorder by the integer programming approach; (ii) \cite{wu2020structural} relaxes only the atomic positions, while PANBB optimizes both the atomic and lattice degrees of freedom. By the way, our value of $B_0'$ appears to be more reasonable since it is closer to those obtained for similar compositions \cite{song2017local}. Within the mean-field approximation \cite{ge2018effect}, we can estimate the Curie temperature of AlCoCrFeNi as 
	$$\frac{2}{3(1-c)k_B}(E_0^{\text{PM}}-E_0^{\text{FM}})\approx210.8~\text{K},$$
	where $c$ is the concentration of nonmagnetic atoms, $k_B$ is the Boltzmann's constant, $E_0^{\text{FM}}$ and $E_0^{\text{PM}}$ are the energies of the FM and PM states associated with the equilibrium volumes from the BM3 EOS fitting based on the PANBB results, respectively; see Fig. \ref{fig:EV FM PM AlCoCrFeNi}. Our estimated Curie temperature lies in the interval of the critical temperature of Al$_x$CoCrFeNi \cite{kao2011electrical}, which is between 200 K ($x=0.75$) and 400 K ($x=1.25$), to some extent demonstrating rational descriptions of the FM and PM states. 
	
	\par To construct the EOS in the temperature ($T$) and pressure ($P$) space, the modified mean field potential (MMFP) approach \cite{song2007modified} is used to describe the ion vibration energy as 
	\begin{align*}
		F_{\text{vib}}^{\text{MMFP}}:&=-k_BT\lrsquare{\frac{3}{2}\ln\frac{Mk_BT}{2\pi\hbar}+\ln v_f(V,T)},\\
		v_f(V,T):&=4\pi\int\exp\lrsquare{\frac{p(r,V)}{k_BT}}r^2\dd r,\\
		p(r,V):&=\frac{1}{2}[E_s(D+r)+E_s(D-r)-2E_s(D)]\\
		&\quad +\frac{\lambda r}{2D}[E_s(D+r)-E_s(D-r)],
	\end{align*}
	where $\hbar$ denotes the Planck constant, $M$ is the relative atomic mass, $D$ is the nearest atomic distance, $r$ corresponds to the distance of the ion vibration away from its equilibrium position, $E_s$ is the static EOS, and $p$ is the constructed potential. Here, $\lambda$ can take three values, $-1$, $0$, $1$, which originate from different forms of the Gr\"uneisen coefficient. As a powerful tool to calculate the Helmholtz free energy, the MMFP effectively describes the ion vibration with the anharmonic effect, and has been widely used in thermodynamic physical properties simulations of HEAs \cite{tian2020structural,wu2020structural}.
	
	\par Since the effectiveness of the MMFP approach highly depends on the potential field constructed from the static EOS, we can validate the previous calculations by investigating the thermodynamic properties of AlCoCrFeNi. We first calculate the equilibrium volume and bulk modulus at ambient conditions with the MMFP approach using the PM PANBB results. Our values (approximately 11.83 \AA$^3$/atom and 148.8 GPa) turn out to be close to those from the X-ray diffraction and diamond anvil cell experiments (11.79 \AA$^3$/atom and 150$\pm$2.5 GPa) \cite{cheng2019pressure}. We then calculate the volume-pressure relation at 300 K with the PM PANBB results; see the solid line with label ``300 K (MMFP)'' in Fig. \ref{fig:V-P}. 
	At equivalent pressures, the maximum relative deviation of the volumes from the diamond anvil cell data \cite{cheng2019pressure} does not exceed 0.7\%.
	
	\begin{figure}[htb]
		\centering
		\includegraphics[width=\columnwidth]{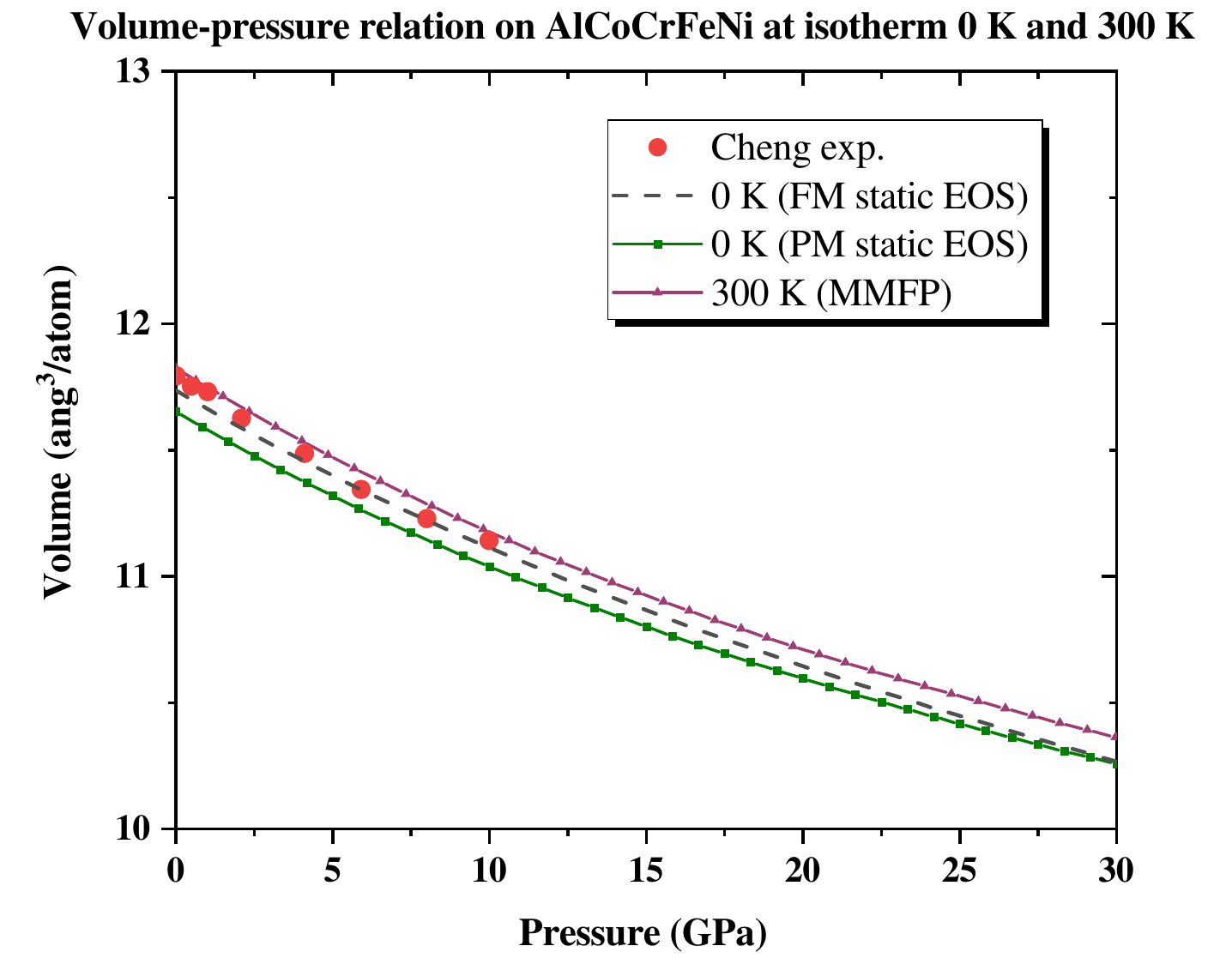}
		\caption{The volume-pressure relation on AlCoCrFeNi at 0 K and 300 K. ``Cheng exp.'' represents the X-ray diffraction and diamond anvil cell experimental results \cite{cheng2019pressure}, ``0 K (FM static EOS)'' refers to the relation by fitting the static EOS based on the FM PANBB results, ``0 K (PM static EOS)'' refers to the relation by fitting the static EOS based on the PM PANBB results, and ``300 K (MMFP)'' refers to the relation given by the MMFP approach based on the PM PANBB results.}
		\label{fig:V-P}
	\end{figure}
	
	\par We would also like to compare PANBB with CG on the relaxation of the FM AlCoCrFeNi. PANBB converges normally at all volumes, while CG is found to fail at 4 volumes due to the breakdown of the LM process. Moreover, a sharp jump of $c/a$ is found during the relaxation of CG when the pressure exceeds 21 GPa; see Fig. \ref{fig:axislength ratio}.
	\begin{figure}[htb]
		\centering
		\includegraphics[width=\columnwidth]{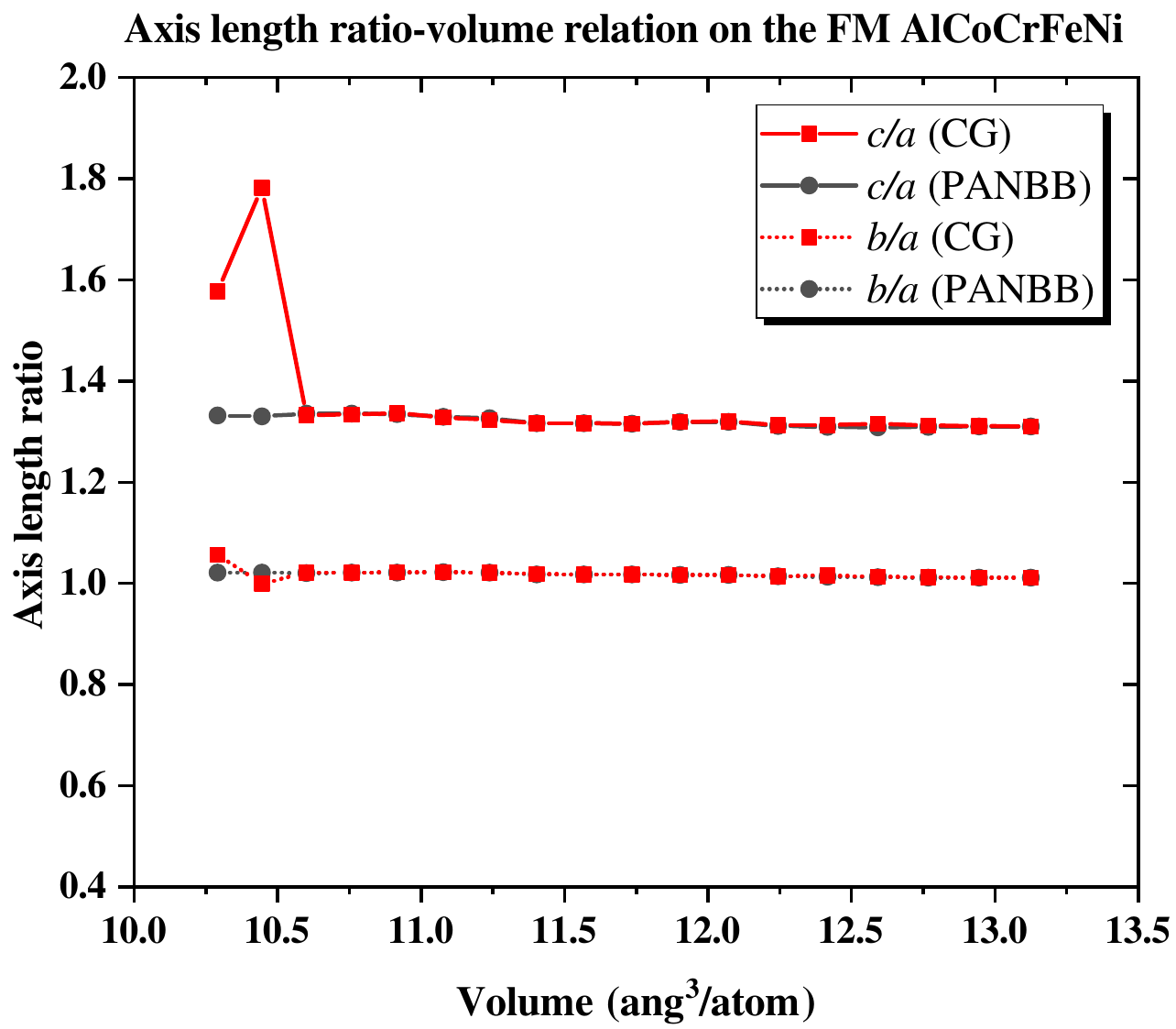}
		\caption{The axis length ratio-volume relation on the FM AlCoCrFeNi. For the volumes where CG fails to converge normally, we use the axis length ratios of the final structures. The axis length ratios ``$c/a$'' and ``$b/a$'' are indicated by solid and dotted lines, respectively. The results of CG and PANBB are indicated by square and circle markers, respectively.}
		\label{fig:axislength ratio}
	\end{figure}
	The results of CG at the other 16 volumes are fitted by the BM3 EOS. The fitted parameters of the static EOS of AlCoCrFeNi are summarized in Table \ref{tab:fitted params} and consistent with those derived from the PANBB results. The average speedup factor of PANBB over CG in terms of running time is 1.36. Note that this speedup factor is calculated at the volumes where both methods converge normally and the energy differences per atom are not larger than 3 meV. 
	
	\par By the way, we illustrate the LLD achieved by fixed-volume relaxation at pressures of approximately 0 GPa, 21 GPa, and 29 GPa using the smearing radial distribution function. This stands in sharp contrast to the peaks at coordination shells in the ideal lattice, as depicted in Fig. \ref{fig:rdf}. Notably, when the pressure does not exceed 21 GPa, the radial distribution functions of the structures relaxed by PANBB and CG display a high degree of consistency. However, as the pressure rises to 29 GPa, the radial distribution function obtained by PANBB exhibits smeared peaks consistent with those of the ideal lattice, while that obtained by CG shows significant displacements. This can be attributed to the previously mentioned issue of strong lattice distortion during the relaxation process of CG. Whether such lattice distortion conforms to physics deserves future investigations.
	\begin{figure*}[!t]
		\centering
		\includegraphics[width=.33\linewidth]{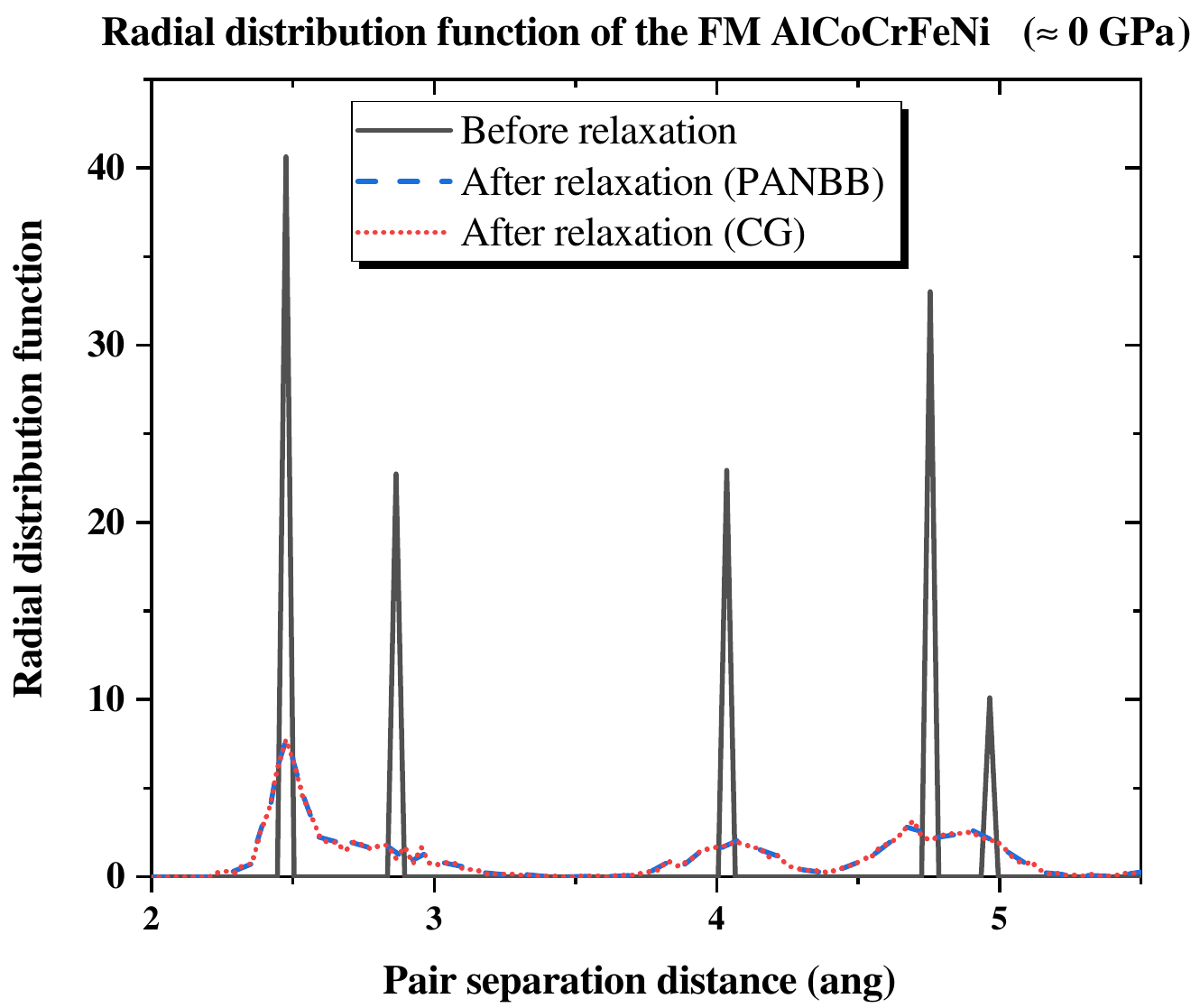}
		\includegraphics[width=.33\linewidth]{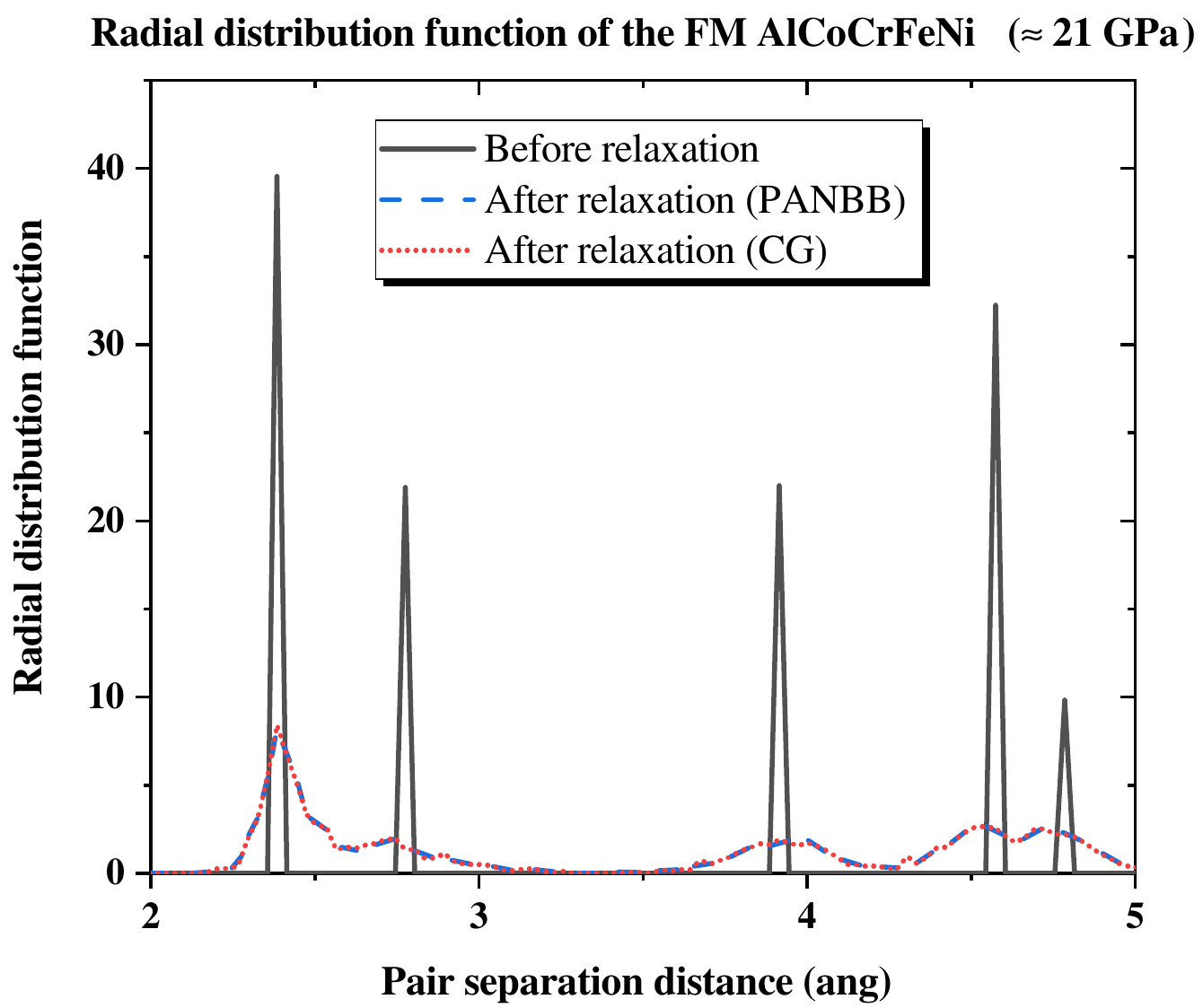}
		\includegraphics[width=.33\linewidth]{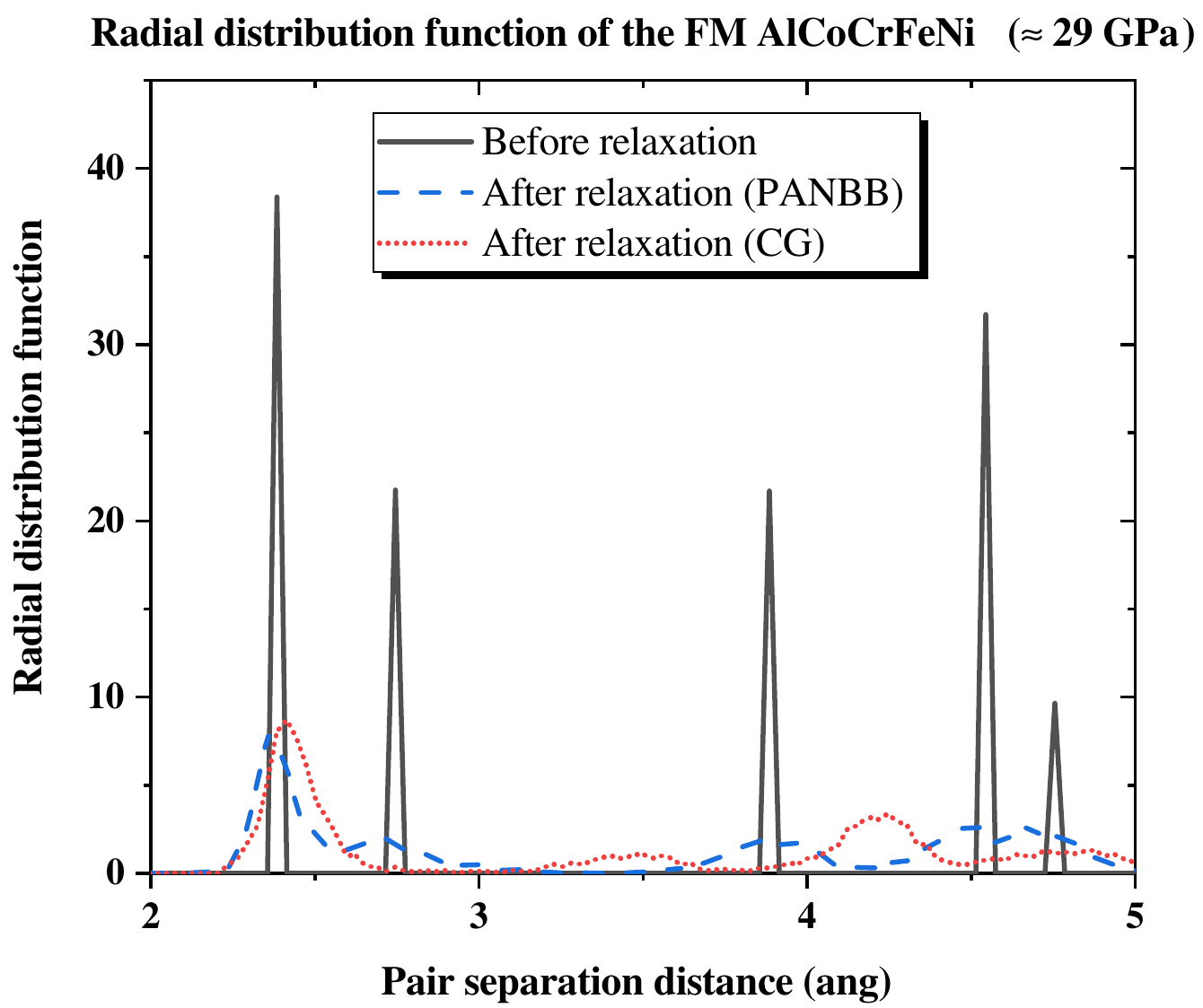}
		\caption{Radial distribution functions of the FM AlCoCrFeNi at pressures of approximately 0 GPa (left), 21 GPa (middle), and 29 GPa (right). The black solid lines refer to the functions before relaxation, while the blue dashed and red dotted lines represent the functions after relaxation using PANBB and CG, respectively.}
		\label{fig:rdf}
	\end{figure*}
	
	\section{Conclusions}
	
	\par In this work, we introduce PANBB, a novel force-based algorithm for relaxing crystal structures under a fixed unit cell volume. By incorporating the gradient projections, distinct curvature-aware initial trial step sizes, and a nonmonotone LM criterion, PANBB exhibits universal superiority over the widely used methods in both efficiency and robustness, as demonstrated across a benchmark test. To the best of our knowledge, PANBB is the first algorithm that guarantees theoretical convergence for the determinant-constrained optimization. Moreover, the application in the AlCoCrFeNi HEAs also showcases its potential in facilitating calculations of the wide-range EOS and thermodynamical properties for multicomponent alloys.
	
	\begin{acknowledgements}
		The work of the third author was supported in part by the National Natural Science Foundation of China under Grant No. U23A20537. The work of the fourth author was partially supported by the National Natural Science Foundation of China under Grants No. 12125108, No. 12226008, No. 11991021, No. 11991020, No. 12021001, and No. 12288201, Key Research Program of Frontier Sciences, Chinese Academy of Sciences under Grant No. ZDBS-LY-7022, CAS-Croucher Funding Scheme for Joint Laboratories ``CAS AMSS-PolyU Joint Laboratory of Applied Mathematics: Nonlinear Optimization Theory, Algorithms and Applications''. The computations were done on the high-performance computers of State Key Laboratory of Scientific and Engineering Computing, Chinese Academy of Sciences and the clusters at Northwestern Polytechnical University. The authors thank Zhen Yang for the helpful discussions on the MMFP approach.
	\end{acknowledgements}
	
	\appendix
	
	\section{Convergence Analyses}\label{appsec:convergence analysis}
	
	In what follows, we rigorously establish the convergence of PANBB to equilibrium states regardless of the initial configurations. The analyses depend on the following two reasonable assumptions on the potential energy functional. 
	
	\begin{assume}\label{assume:lip}
		The potential energy functional $E$ is locally Lipschitz continuously differentiable with respect to $R$ and $A$. The stress tensor $\Sigma$ is continuous with respect to $R$ and $A$. 
	\end{assume}
	
	\begin{assume}\label{assume:coercive}
		The potential energy functional $E$ is coercive over $\calF$, namely, $E(R,A)\to\infty$ as $\norm{R}_2\to\infty$ or $\norm{A}_2\to\infty$ with $\det(A)=V$. 
	\end{assume}
	
	\par For brevity, we make the following notation
	\begin{equation}
		\tilde D_{\latt}:=B^\T\tilde F_{\latt},
		\label{eqn:extra notation}
	\end{equation}
	where $\tilde F_{\latt}$ is defined in Eq. \eqref{eqn:lattice force projection}, and define the level set
	\begin{equation}
		\mathcal{L}:=\{(R,A):E(R,A)\le E^{(0)},~\det(A)=V\}.
		\label{eqn:level set}
	\end{equation}
	In view of Eqs. \eqref{eqn:block update} and \eqref{eqn:extra notation}, $\Ainterkl$ can be rewritten as
	\begin{equation}
		\Ainterkl=A^{(k)}(I+\allakl\Dlak).
		\label{eqn:projection another form}
	\end{equation}
	
	\par Assumptions \ref{assume:lip} and \ref{assume:coercive} directly imply the following lemma.
	
	\begin{lem}\label{lem:lb of min singular value}
		Suppose that Assumptions \ref{assume:lip} and \ref{assume:coercive} hold. Then there exists an $M>0$ such that if $(R,A)\in\mathcal{L}$, all of $\norm{R}_\F$, $\norm{A}_\F$, $\norm{B}_\F$, $\norm{F_{\atom}}_\F$, and $\snorm{\tilde D_{\latt}}_\F$ are upper bounded by $M$.
	\end{lem}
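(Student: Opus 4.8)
The plan is to produce $M$ as the largest of five separate upper bounds, each of which follows from the continuity of the relevant quantity on a single compact set, namely the level set $\mathcal{L}$ of Eq. \eqref{eqn:level set}. Thus the proof splits into two parts: (i) show that $\mathcal{L}$ is compact, and (ii) bound each of $\norm{R}_\F$, $\norm{A}_\F$, $\norm{B}_\F$, $\norm{F_{\atom}}_\F$, $\snorm{\tilde D_{\latt}}_\F$ uniformly over $\mathcal{L}$.

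For part (i), I would first invoke Assumption \ref{assume:coercive}: since $E(R,A)\to\infty$ whenever $\norm{R}_2\to\infty$ or $\norm{A}_2\to\infty$ along $\{\det(A)=V\}$, the set $\mathcal{L}$ cannot contain a sequence with $\norm{R}_2+\norm{A}_2\to\infty$, so there is an $M_0>0$ with $\norm{R}_\F\le M_0$ and $\norm{A}_\F\le M_0$ on $\mathcal{L}$ (passing between the spectral norm appearing in the assumption and the Frobenius norm of the conclusion costs only dimension-dependent constants). By Assumption \ref{assume:lip}, $E$ is continuous, and $\det$ is continuous, so $\mathcal{L}$ is closed; combined with the boundedness just established, $\mathcal{L}$ is compact. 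This already handles $\norm{R}_\F$ and $\norm{A}_\F$.

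For part (ii), the bound on $\norm{B}_\F$ is the one responsible for the lemma's name. Because $\det(A)=V>0$ is fixed, the singular values of $A$ satisfy $\sigma_1\sigma_2\sigma_3=V$ with each $\sigma_i\le\norm{A}_2\le M_0$, whence $\sigma_{\min}(A)\ge V/M_0^2>0$ and therefore $\norm{B}_2=\norm{A^{-\T}}_2=\sigma_{\min}(A)^{-1}\le M_0^2/V$, giving $\norm{B}_\F\le\sqrt{3}\,M_0^2/V$. The bound on $\norm{F_{\atom}}_\F$ is then immediate, since $F_{\atom}$ is (up to sign) a partial derivative of $E$, which is continuous by Assumption \ref{assume:lip} and hence bounded on the compact set $\mathcal{L}$. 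Finally, for $\snorm{\tilde D_{\latt}}_\F$ I would use Eq. \eqref{eqn:extra notation}: the stress $\Sigma$ is continuous by Assumption \ref{assume:lip} and so bounded on $\mathcal{L}$, hence Eq. \eqref{eqn:lattice force} together with submultiplicativity of $\snorm{\cdot}_\F$ and the already-established bounds on $\norm{R}_\F$, $\norm{B}_\F$, $\norm{F_{\atom}}_\F$ yields a bound on $\norm{F_{\latt}}_\F$; since $\tilde F_{\latt}$ in Eq. \eqref{eqn:lattice force projection} is the orthogonal projection of $F_{\latt}$ onto $\{D\in\R^{3\times3}:\inner{B,D}=0\}$, the Pythagorean identity gives $\snorm{\tilde F_{\latt}}_\F\le\norm{F_{\latt}}_\F$, and then $\snorm{\tilde D_{\latt}}_\F=\snorm{B^\T\tilde F_{\latt}}_\F\le\norm{B}_\F\snorm{\tilde F_{\latt}}_\F$ is bounded. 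Taking $M$ to be the largest of the five bounds finishes the proof.

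I do not expect a real obstacle here: this is essentially a compactness-plus-continuity argument. The only place that exploits the particular structure of the problem rather than generic continuity is the elementary lower bound $\sigma_{\min}(A)\ge V/M_0^2$, which is precisely what converts the determinant constraint and the $\norm{A}_\F$ bound into control of $\norm{B}_\F$ and, through it, of $\norm{F_{\latt}}_\F$ and $\snorm{\tilde D_{\latt}}_\F$; everything else is routine bookkeeping with norm inequalities.
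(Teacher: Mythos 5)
Your proposal is correct and follows essentially the same route as the paper's (much terser) proof: coercivity bounds $\norm{R}_\F$ and $\norm{A}_\F$, the determinant constraint combined with the bound on $\norm{A}_\F$ yields the lower bound on $\sigma_{\min}(A)$ and hence the bound on $\norm{B}_\F$, and the remaining quantities are controlled through Eqs. \eqref{eqn:lattice force} and \eqref{eqn:extra notation} together with Assumption \ref{assume:lip}. Your write-up simply supplies the details (compactness of $\mathcal{L}$, the singular-value computation, nonexpansiveness of the projection) that the paper leaves implicit.
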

	\begin{proof}
		The existence of a uniform upper bound for $\norm{R}_\F$ and $\norm{A}_\F$ follows directly from Assumption \ref{assume:coercive} and the definition of the set $\mathcal{L}$ in Eq.~\eqref{eqn:level set}. That for $\norm{B}_\F$ is derived from $\det(A)=V$. The uniform boundedness of $\norm{F_{\atom}}_\F$ and $\snorm{\tilde D_{\latt}}_\F$ are then deduced from Eqs. \eqref{eqn:lattice force}, \eqref{eqn:extra notation}, and Assumption \ref{assume:lip}.
	\end{proof}
	
	\par Let $\bar\alpha_{\atom}$, $\bar\alpha_{\latt}$, $\uline{\alpha}_{\atom}$, and $\uline{\alpha}_{\atom}$ denote the upper and lower bounds for the ABB step sizes in Eq. \eqref{eqn:truncation}. Based on Assumptions \ref{assume:lip} and \ref{assume:coercive}, some positive constants are defined using $M$ in Lemma \ref{lem:lb of min singular value}, $\delta_{\atom}$, $\delta_{\latt}$ in Eq. \eqref{eqn:backtracking} as well as $\bar\alpha_{\atom}$ and $\bar\alpha_{\latt}$ in Eq. \eqref{eqn:truncation}. 
	\begin{widetext}
		\begin{equation}\begin{aligned}
				L_{\nabla E}:&=\sup\lrbrace{\frac{\snorm{\nabla E(R_1,A_1)-\nabla E(R_2,A_2)}_\F}{\snorm{(R_1,A_1)-(R_2,A_2)}_\F}:\snorm{R_i}_\F\le(1+\bar\alpha_{\atom})M,~\snorm{A_i}_\F\le\frac{3}{2}M,~i=1,2},\\
				L_{\latt}:&=\sup\lrbrace{\frac{\abs{E(R,A_1)-E(R,A_2)}}{\snorm{A_1-A_2}_\F}:\snorm{R}_\F\le(1+\bar\alpha_{\atom})M,~\snorm{A_i}_\F\le3M,~i=1,2},\\
				\bar M:&=\frac{48}{7}M^3L_{\latt},\quad\bar\ell:=\left\lceil\max\lrbrace{\frac{\log\frac{L_{\nabla E}\bar\alpha_{\atom}}{2(1-\eta)}}{\log\frac{1}{\delta_{\atom}}},\frac{\log\frac{(L_{\nabla E}+2\bar M)\bar\alpha_{\latt}}{2(1-\eta)}}{\log\frac{1}{\delta_{\latt}}},\frac{\log(2M\bar\alpha_{\latt})}{\log\frac{1}{\delta_{\latt}}}}\right\rceil.
			\end{aligned}\label{eqn:constants}\end{equation}
	\end{widetext}
	
	\par Lemma \ref{lem:successfully generation} below asserts that the LM processes in PANBB are always finite.
	
	\begin{lem}\label{lem:successfully generation}
		Suppose that Assumptions \ref{assume:lip} and \ref{assume:coercive} hold. Assume that $(R^{(k)},A^{(k)})\in\mathcal{L}$. Then in the $(k+1)$th iteration, the LM criterion \eqref{eqn:nonmonotone termination criterion} must be satisfied after invoking the backtracking \eqref{eqn:backtracking} for at most $\bar\ell$ times and $(R^{(k+1)},A^{(k+1)})\in\mathcal{L}$, where $\bar\ell$ is defined in Eq. \eqref{eqn:constants}.
	\end{lem}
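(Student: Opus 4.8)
The plan is to convert the acceptance test \eqref{eqn:nonmonotone termination criterion} into a one‑sided descent estimate. Concretely, I would show that at every inner trial $\ell$ for which $\allakl M\le\tfrac{1}{2}$,
\[
E_{\trial}^{(k,\ell)}-E^{(k)}\ \le\ -\alatkl\snorm{F_{\atom}^{(k)}}_\F^2\left(1-\frac{L_{\nabla E}}{2}\alatkl\right)-\allakl\snorm{\aaFlak}_\F^2\left(1-\frac{L_{\nabla E}+2\bar M}{2}\allakl\right),
\]
so that, once $\ell$ is large enough that both parenthesized factors reach $\eta$, the criterion is met. Matching the three conditions ``$\tfrac{L_{\nabla E}}{2}\alatkl\le1-\eta$'', ``$\tfrac{L_{\nabla E}+2\bar M}{2}\allakl\le1-\eta$'', and ``$\allakl M\le\tfrac{1}{2}$'' against the geometric decay $\alatkl\le\bar\alpha_{\atom}\delta_{\atom}^\ell$, $\allakl\le\bar\alpha_{\latt}\delta_{\latt}^\ell$ produced by the backtracking \eqref{eqn:backtracking} gives exactly the three entries under the maximum defining $\bar\ell$ in \eqref{eqn:constants}. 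Combined with the elementary facts $E^{(k)}\le\bar E^{(k)}\le E^{(0)}$ — which follow from the recursion \eqref{eqn:recursion} and the criterion \eqref{eqn:nonmonotone termination criterion} having been satisfied at every earlier iteration — this proves that backtracking stops after at most $\bar\ell$ steps and that $(R^{(k+1)},A^{(k+1)})\in\mathcal{L}$.

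First I would collect the uniform bounds valid along the inner loop. Since backtracking only shrinks the step sizes, $\alatkl\le\bar\alpha_{\atom}$ and $\allakl\le\bar\alpha_{\latt}$, so Lemma \ref{lem:lb of min singular value} gives $\snorm{\Rtkl}_\F\le(1+\bar\alpha_{\atom})M$. Rewriting $\Ainterkl=A^{(k)}(I+\allakl\Dlak)$ as in \eqref{eqn:projection another form} and using $\snorm{\Dlak}_\F\le M$, I single out the regime $\allakl M\le\tfrac{1}{2}$ — reached whenever $\ell\ge\log(2M\bar\alpha_{\latt})/\log(1/\delta_{\latt})$, the third entry of $\bar\ell$ — in which $\snorm{\Ainterkl}_\F\le\tfrac{3}{2}M$, $\det(\Ainterkl)>0$, and the scaled matrix $A_{\trial}^{(k,\ell)}$ satisfies $\det(A_{\trial}^{(k,\ell)})=V$ and $\snorm{A_{\trial}^{(k,\ell)}}_\F\le 3M$. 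These place $\Rtkl$, $\Ainterkl$, $A_{\trial}^{(k,\ell)}$ inside the balls on which $L_{\nabla E}$ and $L_{\latt}$ from \eqref{eqn:constants} are defined.

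The crux — and the step I expect to cost the most work — is showing that the gradient projection renders the scaling operation second order in $\allakl$. From \eqref{eqn:local linear approx}, \eqref{eqn:lattice force projection}, and \eqref{eqn:extra notation} one has $\trace(\Dlak)=\inner{B^{(k)},\aaFlak}=0$, so for a $3\times3$ matrix $\det(I+t\Dlak)=1+t\,\trace(\Dlak)+t^2 e_2(\Dlak)+t^3\det(\Dlak)=1+t^2 e_2(\Dlak)+t^3\det(\Dlak)$ carries no linear term; hence, in the regime $\allakl M\le\tfrac{1}{2}$, $\abs{\det(\Ainterkl)-V}\le C(\allakl)^2\snorm{\Dlak}_\F^2$ for an absolute constant $C$, and taking the real cube root gives $\snorm{A_{\trial}^{(k,\ell)}-\Ainterkl}_\F\le C'(\allakl)^2 M\snorm{\Dlak}_\F^2$. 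Substituting $\Dlak=(A^{(k)})^{-1}\aaFlak$ with $\snorm{(A^{(k)})^{-1}}_2=\snorm{B^{(k)}}_2\le M$ converts this to $\snorm{A_{\trial}^{(k,\ell)}-\Ainterkl}_\F\le C''(\allakl)^2 M^3\snorm{\aaFlak}_\F^2$, and Lipschitz continuity of $E$ in $A$ yields $\abs{E(\Rtkl,A_{\trial}^{(k,\ell)})-E(\Rtkl,\Ainterkl)}\le L_{\latt}\snorm{A_{\trial}^{(k,\ell)}-\Ainterkl}_\F\le\bar M(\allakl)^2\snorm{\aaFlak}_\F^2$ with $\bar M=\tfrac{48}{7}M^3 L_{\latt}$. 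The delicate point is that this exploits the precise algebraic form \eqref{eqn:local linear approx} of the tangent space, not merely that it is a tangent space; pinning down the numerical constant $\tfrac{48}{7}$ afterwards is routine.

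Finally I would apply the descent lemma along the segment $t\mapsto(R^{(k)}+t\alatkl F_{\atom}^{(k)},\,A^{(k)}(I+t\allakl\Dlak))$, $t\in[0,1]$, from $(R^{(k)},A^{(k)})$ to $(\Rtkl,\Ainterkl)$, which by the bounds above lies in the domain of $L_{\nabla E}$. Using $F_{\atom}^{(k)}=-\nabla_R E^{(k)}$, $F_{\latt}^{(k)}=-\nabla_A E^{(k)}$, and the projection identity $\inner{F_{\latt}^{(k)},\aaFlak}=\snorm{\aaFlak}_\F^2$ (valid because $\aaFlak$ is the orthogonal projection of $F_{\latt}^{(k)}$ onto $\calT^{(k)}$), the first‑order term is exactly $-\alatkl\snorm{F_{\atom}^{(k)}}_\F^2-\allakl\snorm{\aaFlak}_\F^2$; adding the quadratic remainder $\tfrac{L_{\nabla E}}{2}\big((\alatkl)^2\snorm{F_{\atom}^{(k)}}_\F^2+(\allakl)^2\snorm{\aaFlak}_\F^2\big)$ together with the scaling bound from the previous paragraph gives the displayed inequality. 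By the choice of $\bar\ell$, after at most $\bar\ell$ backtracks both parenthesized factors exceed $\eta$, so \eqref{eqn:nonmonotone termination criterion} holds (using $\bar E^{(k)}\ge E^{(k)}$) and the inner loop terminates at some $\ell^\ast\le\bar\ell$. Evaluating \eqref{eqn:nonmonotone termination criterion} at $\ell^\ast$ gives $E^{(k+1)}\le\bar E^{(k)}\le E^{(0)}$, while the scaling step forces $\det(A^{(k+1)})=V$; hence $(R^{(k+1)},A^{(k+1)})\in\mathcal{L}$, completing the argument.
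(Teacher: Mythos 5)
Your proposal is correct and follows essentially the same route as the paper's proof: a descent-lemma estimate for the unscaled update $(\Rtkl,\Ainterkl)$ combined with a second-order bound on the energy change caused by the scaling operation, the latter hinging on the trace-free property $\inner{B^{(k)},\aaFlak}=0$ of the projected direction so that $\det(I+\allakl\Dlak)-1$ has no linear term; your single displayed inequality is an exact algebraic rearrangement of the paper's two-step estimate, and your three conditions reproduce the three entries under the maximum defining $\bar\ell$ with the same constant $\bar M=\tfrac{48}{7}M^3L_{\latt}$. The only cosmetic difference is that you phrase the quadratic determinant term via the elementary symmetric polynomials of $\Dlak$ rather than via its eigenvalues, which is equivalent.
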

	\begin{proof}
		We prove this lemma by showing that there is a sufficient reduction from $E^{(k)}$ to $E_{\trial}^{(k,\ell)}$ provided that $\ell\ge\bar\ell$, which indicates the fulfillment of the criterion \eqref{eqn:nonmonotone termination criterion}. We divide the proof into two steps. 
		
		\par\textbf{Step 1.} Estimate $E(\Rtkl,\Ainterkl)-E^{(k)}$. \\[0.2em]
		It follows from Eqs. \eqref{eqn:local linear approx} and \eqref{eqn:block update} that 
		\begin{equation}
			\begin{aligned}
				\inner{F_{\atom}^{(k)},\Rtkl-R^{(k)}}&=-\frac{1}{\alatkl}\snorm{\Rtkl-R^{(k)}}_\F^2,\\
				\inner{F_{\latt}^{(k)},\Ainterkl-A^{(k)}}&=-\frac{1}{\allakl}\snorm{\Ainterkl-A^{(k)}}_\F^2.
			\end{aligned}
			\label{eqn:optimality}
		\end{equation}
		Since $\ell\ge\bar\ell$, it is not hard to show $\allakl\le1/(2M)$. From this, $\alatkl\le\bar\alpha_{\atom}$, and Lemma \ref{lem:lb of min singular value}, we can infer $\allakl\snorm{\Dlak}_2\le1/2$ and
		\begin{align}
			\snorm{\Rtkl}_\F&\stackrel{\eqref{eqn:block update}}{\le}(1+\bar\alpha_{\atom})M,\label{eqn:R update norm}\\
			\snorm{\Ainterkl}_\F&\stackrel{\mathmakebox[\widthof{=}]{\eqref{eqn:projection another form}}}{\le}\snorm{A^{(k)}}_\F(1+\allakl\snorm{\Dlak}_2)\le\frac{3}{2}M. 
			\label{eqn:projection norm}
		\end{align}
		By the definition of $L_{\nabla E}$ in Eq. \eqref{eqn:constants}, one can derive from Eqs. \eqref{eqn:optimality}-\eqref{eqn:projection norm} that $E(\Rtkl,\Ainterkl)-E^{(k)}$ is upper bounded by
		\begin{align}
			&\lrsquare{\frac{L_{\nabla E}}{2}-\frac{1}{\alatkl}}\snorm{\Rtkl-R^{(k)}}_\F^2\nonumber\\
			&+\lrsquare{\frac{L_{\nabla E}}{2}-\frac{1}{\allakl}}\snorm{\Ainterkl-A^{(k)}}_\F^2.\label{eqn:estimate one}
		\end{align}
		
		\par \textbf{Step 2.} Estimate $E_{\trial}^{(k,\ell)}-E(\Rtkl,\Ainterkl)$. \\[0.2em]
		First notice from Eq. \eqref{eqn:projection another form} that $\det(\Ainterkl)=V\det(I+\allakl\Dlak)$. Due to $\allakl\snorm{\Dlak}_2\le1/2$ and the fact that the modulus of any eigenvalue is not larger than the maximum singular value, one has 
		\begin{equation}
			\det(\Ainterkl)\in\lrsquare{\frac{1}{8}V,\frac{27}{8}V}.
			\label{eqn:determinant range}
		\end{equation}
		By the definition of the scaling operation (see S4 in Fig. \ref{fig:existing methods flowchart}), Eqs. \eqref{eqn:projection norm} and \eqref{eqn:determinant range}, $\snorm{\Atkl}_\F\le3M$. From this, 
		\begin{align}
			&E_{\trial}^{(k,\ell)}-E(\Rtkl,\Ainterkl)\nonumber\\
			\stackrel{\mathmakebox[\widthof{=}]{\eqref{eqn:constants},\eqref{eqn:R update norm}}}{\le}&~L_{\latt}\snorm{\Atkl-\Ainterkl}_\F\nonumber\\
			=&~L_{\latt}\abs{\sqrt[3]{\det(\Ainterkl)}-\sqrt[3]{V}}\frac{\snorm{\Ainterkl}_\F}{\sqrt[3]{\det(\Ainterkl)}}.\label{eqn:estimate two pre}
		\end{align}
		For one thing, in view of Eqs. \eqref{eqn:projection norm} and \eqref{eqn:determinant range},
		\begin{equation}
			\frac{\snorm{\Ainterkl}_\F}{\sqrt[3]{\det(\Ainterkl)}}\le\frac{3M}{\sqrt[3]{V}}.
			\label{eqn:estimate 2}
		\end{equation}
		For another, again by Eq. \eqref{eqn:determinant range},
		\begin{align}
			&\abs{\sqrt[3]{\det(\Ainterkl)}-\sqrt[3]{V}}\nonumber\\
			=&\frac{\abs{\det(\Ainterkl)-V}}{\det(\Ainterkl)^{2/3}+\det(\Ainterkl)^{1/3}V^{1/3}+V^{2/3}}\nonumber\\
			\le&\frac{4\abs{\det(\Ainterkl)-V}}{7V^{2/3}}.\label{eqn:estimate 3}
		\end{align}
		Let $\lambda_1$, $\lambda_2$, $\lambda_3$ be the eigenvalues of $\allakl\Dlak$. Since $\aaFlak$ lies on $\calT^{(k)}$ (see the definition in Eq. \eqref{eqn:local linear approx}), one has $\sum_{i=1}^3\lambda_i=0$ from Eq. \eqref{eqn:extra notation}. Therefore,
		\begin{align}
			\abs{\det(\Ainterkl)-V}&\stackrel{\mathmakebox[\widthof{=}]{\eqref{eqn:projection another form}}}{=\:\:}V\abs{\det(I+\allakl\Dlak)-1}\nonumber\\
			&=\:\:V\abs{\lambda_1\lambda_2+\lambda_1\lambda_3+\lambda_2\lambda_3+\lambda_1\lambda_2\lambda_3}.\nonumber
		\end{align}
		Since $\max_i\{\abs{\lambda_i}\}\le\allakl\snorm{\Dlak}_2\le1/2$, 
		\begin{align}
			\abs{\det(\Ainterkl)-V}&\le4V\max_i\{\abs{\lambda_i}\}^2\le4V\norm{\allakl\Dlak}_2^2\nonumber\\
			&\stackrel{\mathmakebox[\widthof{=}]{\eqref{eqn:projection another form}}}{=}4V\norm{B^{(k)\T}(\Ainterkl-A^{(k)})}_2^2\nonumber\\
			&\le4M^2V\snorm{\Ainterkl-A^{(k)}}_\F^2,\label{eqn:estimate 4}
		\end{align}
		where the last inequality uses Lemma \ref{lem:lb of min singular value}. Combining Eq. \eqref{eqn:constants} and Eqs. \eqref{eqn:estimate two pre}-\eqref{eqn:estimate 4}, one obtains
		\begin{equation}
			E_{\trial}^{(k,\ell)}-E(\Rtkl,\Ainterkl)\le\bar M\snorm{\Ainterkl-A^{(k)}}_\F^2.
			\label{eqn:estimate two}
		\end{equation}
		
		\par Finally, putting Eqs. \eqref{eqn:estimate one} and \eqref{eqn:estimate two} together, we have
		\begin{multline*}
			E_{\trial}^{(k,\ell)}-E^{(k)}
			\le\lrsquare{\frac{L_{\nabla E}}{2}-\frac{1}{\alatkl}}\snorm{\Rtkl-R^{(k)}}_\F^2\\
			+\lrsquare{\frac{L_{\nabla E}}{2}+\bar M-\frac{1}{\allakl}}\snorm{\Ainterkl-A^{(k)}}_\F^2.
		\end{multline*}
		
		\par By virtue of the criterion \eqref{eqn:nonmonotone termination criterion} and $E^{(k)}\le\bar E^{(k)}$ \cite[Lemma 2]{hu2022force}, the LM process must stop when
		$$\left\{\begin{array}{l}
			\dfrac{L_{\nabla E}}{2}-\dfrac{1}{\alatkl}\le-\dfrac{\eta}{\alatkl},\\
			\dfrac{L_{\nabla E}}{2}+\bar M-\dfrac{1}{\allakl}\le-\dfrac{\eta}{\allakl},
		\end{array}\right.$$
		which is fulfilled once $\ell$ reaches $\bar\ell$. Since $E(\Rtkl,\Atkl)\le\bar E^{(k)}\le E^{(0)}$, $(R^{(k+1)},A^{(k+1)})\in\mathcal{L}$. 
	\end{proof}
	
	\par Leveraging Lemma \ref{lem:successfully generation}, we are ready to establish the convergence of PANBB.
	
	\begin{thm}\label{thm:convergence}
		Suppose that Assumptions \ref{assume:lip} and \ref{assume:coercive} hold. Let $\{(R^{(k)},A^{(k)})\}$ be the configuration sequence generated by \textup{PANBB} with $\mu_k\in[\mu_{\min},\mu_{\max}]\subseteq(0,1]$. Then we have $F_{\atom}^{(k)}\to0$ and $\aaFlak\to0$ as $k\to\infty$. Moreover, there exists at least one limit point of $\{(R^{(k)},A^{(k)})\}$ and any limit point $(R^{\star},A^{\star})$ satisfies $F_{\atom}^{\star}=0$, $\Sigma_{\dev}^{\star}=0$. 
	\end{thm}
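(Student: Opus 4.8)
The plan is to run the standard nonmonotone line-search argument, with Lemma~\ref{lem:successfully generation} doing the heavy lifting (finiteness of every inner loop and invariance of the level set $\mathcal{L}$). First I would record that, by Assumption~\ref{assume:coercive} and the continuity in Assumption~\ref{assume:lip}, the level set $\mathcal{L}$ of \eqref{eqn:level set} is compact, so $E$ attains a finite minimum on it, and that by Lemma~\ref{lem:successfully generation} together with an induction on $k$ the whole trajectory satisfies $(R^{(k)},A^{(k)})\in\mathcal{L}$. Next I would show that the surrogate sequence $\{\bar E^{(k)}\}$ is nonincreasing and convergent: whenever the $(k+1)$th iteration accepts a trial at inner index $\ell$, the criterion \eqref{eqn:nonmonotone termination criterion} gives $E^{(k+1)}\le\bar E^{(k)}-\eta(\alpha_{\atom}^{(k-1,\ell)}\snorm{F_{\atom}^{(k)}}_\F^2+\alpha_{\latt}^{(k-1,\ell)}\snorm{\aaFlak}_\F^2)$; substituting this into the recursion \eqref{eqn:recursion}, which writes $\bar E^{(k+1)}$ as a convex combination of $\bar E^{(k)}$ and $E^{(k+1)}$, yields
\begin{equation*}
\bar E^{(k)}-\bar E^{(k+1)}\ge\frac{\mu^{(k)}q^{(k)}}{1+\mu^{(k)}q^{(k)}}\,\eta\big(\alpha_{\atom}^{(k-1,\ell)}\snorm{F_{\atom}^{(k)}}_\F^2+\alpha_{\latt}^{(k-1,\ell)}\snorm{\aaFlak}_\F^2\big).
\end{equation*}
Hence $\bar E^{(k)}$ decreases and is bounded below by $\min_{\mathcal{L}}E$, so it converges.

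For the summability I would use that $q^{(k)}\ge1$ for all $k$ (immediate from $q^{(0)}=1$ and $q^{(k+1)}=\mu^{(k)}q^{(k)}+1$), whence $\mu^{(k)}q^{(k)}\ge\mu_{\min}$ and, since $t\mapsto t/(1+t)$ is increasing, the prefactor above is uniformly at least $\mu_{\min}/(1+\mu_{\min})>0$ — this is the point that keeps the argument valid even when $\mu_{\max}=1$ and $\{q^{(k)}\}$ is unbounded. I would also note that the accepted step sizes are bounded below by positive constants: the initial trial step sizes are at least the truncation floors in \eqref{eqn:truncation} (or the fixed positive values used at $k=0$), and Lemma~\ref{lem:successfully generation} caps the number of backtracking steps \eqref{eqn:backtracking} by $\bar\ell$, so $\alpha_{\atom}^{(k-1,\ell)}$ and $\alpha_{\latt}^{(k-1,\ell)}$ never drop below fixed positive thresholds. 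Telescoping the displayed inequality over $k$ and using convergence of $\{\bar E^{(k)}\}$ then gives $\sum_k\snorm{F_{\atom}^{(k)}}_\F^2<\infty$ and $\sum_k\snorm{\aaFlak}_\F^2<\infty$; in particular $F_{\atom}^{(k)}\to0$ and $\aaFlak\to0$.

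For the limit points, compactness of $\mathcal{L}$ yields a convergent subsequence $(R^{(k_j)},A^{(k_j)})\to(R^{\star},A^{\star})\in\mathcal{L}$, so at least one limit point exists. Under Assumption~\ref{assume:lip}, $F_{\atom}$ is continuous, and $\tilde F_{\latt}$ is a continuous function of $(R,A)$ on $\mathcal{L}$, since by \eqref{eqn:lattice force} and \eqref{eqn:lattice force projection} it is built from the continuous $\Sigma$, from $F_{\atom}$, and from $B=A^{-\T}$, which is continuous because $\det(A)=V\ne0$ there. Passing to the limit along the subsequence, $F_{\atom}^{\star}=0$ and $\tilde F_{\latt}^{\star}=0$. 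To convert the latter into $\Sigma_{\dev}^{\star}=0$, I would substitute $F_{\atom}^{\star}=0$ into \eqref{eqn:lattice force} to get $F_{\latt}^{\star}=V\Sigma^{\star}B^{\star}$, then into \eqref{eqn:lattice force projection}: $\tilde F_{\latt}^{\star}=0$ forces $\Sigma^{\star}B^{\star}$ to be a scalar multiple of $B^{\star}$; since $B^{\star}$ is invertible, $\Sigma^{\star}$ is itself a scalar multiple of the identity, hence isotropic, so $\Sigma_{\dev}^{\star}=0$.

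Given Lemmas~\ref{lem:lb of min singular value} and~\ref{lem:successfully generation}, the remaining work is largely bookkeeping; the two places that need genuine care are (i) extracting the uniform positive lower bound on the recursion prefactor $\mu^{(k)}q^{(k)}/(1+\mu^{(k)}q^{(k)})$ without assuming $\{q^{(k)}\}$ bounded, and (ii) the closing algebraic step turning $F_{\atom}^{\star}=0$ and $\tilde F_{\latt}^{\star}=0$ into $\Sigma_{\dev}^{\star}=0$ through the projection formula \eqref{eqn:lattice force projection} and the invertibility of $B^{\star}$. I expect (ii) to be the main conceptual obstacle, since it is the one spot where the specific algebraic structure of the determinant constraint, rather than generic line-search machinery, is actually used.
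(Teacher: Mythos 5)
Your proposal is correct and follows essentially the same route as the paper's proof: lower-bounding the accepted step sizes via the truncation floors and the cap $\bar\ell$ on backtracking, telescoping the surrogate-sequence recursion with the uniform positive prefactor $\mu^{(k)}q^{(k)}/q^{(k+1)}\ge\mu_{\min}/(1+\mu_{\max})$ (your bound $\mu_{\min}/(1+\mu_{\min})$ is an equivalent variant), and then using compactness plus the projection formula and invertibility of $B^{\star}$ to convert $\tilde F_{\latt}^{\star}=0$ into $\Sigma_{\dev}^{\star}=0$. No substantive differences from the paper's argument.
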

	\begin{proof}
		By Lemma \ref{lem:successfully generation}, the backtracking scheme \eqref{eqn:backtracking}, and $\alpha_{\atom}^{(k-1,0)}\ge\uline{\alpha}_{\atom}$, $\alpha_{\latt}^{(k-1,0)}\ge\uline{\alpha}_{\latt}$, it holds that 
		$$\alpha_{\atom}^{(k)}\ge\beta_{\atom}:=\uline{\alpha}_{\atom}\delta_{\atom}^{\bar\ell},~~\alpha_{\latt}^{(k)}\ge\beta_{\latt}:=\uline{\alpha}_{\latt}\delta_{\latt}^{\bar\ell}.$$
		Plugging this into the criterion \eqref{eqn:nonmonotone termination criterion}, we obtain
		\begin{equation}
			E^{(k+1)}\le\bar E^{(k)}-\eta\lrsquare{\beta_{\atom}\snorm{F_{\atom}^{(k)}}_\F^2+\beta_{\latt}\snorm{\aaFlak}_\F^2}.\label{eqn:suff reduce nonmonotone}
		\end{equation}
		Let $\Delta^{(k)}:=\beta_{\atom}\snorm{F_{\atom}^{(k)}}_\F^2+\beta_{\latt}\snorm{\aaFlak}_\F^2$. From this, a recursion for the monitoring sequence follows.
		\begin{align}
			\bar E^{(k+1)}&\stackrel{\mathmakebox[\widthof{=}]{\eqref{eqn:recursion}}}{=\:\:}\frac{\bar E^{(k)}+\mu^{(k)}q^{(k)}E^{(k+1)}}{q^{(k+1)}}\nonumber\\
			&\stackrel{\mathmakebox[\widthof{=}]{\eqref{eqn:suff reduce nonmonotone}}}{\le\:\:}\frac{\bar E^{(k)}+\mu^{(k)}q^{(k)}(\bar E^{(k)}-\eta\Delta^{(k)})}{q^{(k+1)}}\nonumber\\
			&=\:\:\bar E^{(k)}-\eta\frac{\mu^{(k)}q^{(k)}}{q^{(k+1)}}\Delta^{(k)}.\label{eqn:recursion of monitoring}
		\end{align}
		A byproduct of Eq. \eqref{eqn:recursion of monitoring} is the monotonicity of $\{\bar E^{(k)}\}$. Since $E$ is bounded from below over $\mathcal{L}$ (by Assumption \ref{assume:lip} and Lemma \ref{lem:lb of min singular value}) and $E^{(k)}\le\bar E^{(k)}$ for any $k\ge0$ \cite[Lemma 2]{hu2022force}, $\{\bar E^{(k)}\}$ is also bounded from below. Consequently,
		\begin{equation}
			\sum_{k=0}^\infty\frac{\mu^{(k)}q^{(k)}\Delta^{(k)}}{q^{(k+1)}}\stackrel{\eqref{eqn:recursion of monitoring}}{\le}\frac{1}{\eta}\sum_{k=0}^\infty(\bar E^{(k)}-\bar E^{(k+1)})<+\infty.
			\label{eqn:res summation}
		\end{equation}
		Following from Eq. \eqref{eqn:recursion} and $\mu_{\max}\ge\mu^{(k)}\ge\mu_{\min}>0$, $q^{(k)}\ge1$, we can derive
		\begin{equation*}
			\frac{\mu^{(k)}q^{(k)}}{q^{(k+1)}}\ge\frac{\mu_{\min}q^{(k)}}{\mu^{(k)}q^{(k)}+1}=\frac{\mu_{\min}}{\mu^{(k)}+1/q^{(k)}}\ge\frac{\mu_{\min}}{\mu_{\max}+1},
		\end{equation*}
		which, together with Eq. \eqref{eqn:res summation}, yields $\Delta^{(k)}\to0$ as $k\to\infty$. By $\beta_{\atom}$, $\beta_{\latt}>0$, the definition of $\Delta^{(k)}$, and Lemma \ref{lem:lb of min singular value}, one has $F_{\atom}^{(k)}\to0$ and $\aaFlak\to0$ as $k\to\infty$. 
		
		\par Let $(R^{\star},A^{\star})$ be a limit point of $\{(R^{(k)},A^{(k)})\}$, whose existence is guaranteed by Lemma \ref{lem:lb of min singular value}. Then by the above arguments and Eq. \eqref{eqn:lattice force projection}, $F_{\atom}^{\star}=0$ and
		\begin{align*}
			0=\tilde F_{\latt}^{\star}&=F_{\latt}^{\star}-\frac{\inner{B^{\star},F_{\latt}^{\star}}}{\snorm{B^{\star}}_\F^2}B^{\star}\\
			&=V\lrsquare{\Sigma^{\star}B^{\star}-\frac{\inner{B^{\star},\Sigma^{\star}B^{\star}}}{\snorm{B^{\star}}_\F^2}B^{\star}},
		\end{align*}
		which implies that $\Sigma^{\star}$ is a multiple of $I$ and hence $\Sigma_{\dev}^{\star}=0$. 
	\end{proof}
	
	\section{Magnetic Disorder Modeling}\label{appsec:modeling details}
	
	We elaborate below on the technical details when modeling the magnetic disorder for the PM structures, in particular, how to solve the distribution problem for the spin-up or spin-down sites. We assume that an atomic configuration is available (e.g., given by the SAE method), and moreover introduce some notations:
	
	\begin{itemize}
		\item $S\in\N$: number of the atomic species (in the AlCoCrFeNi case, equals 5);
		
		\item $S_{\magg}\in\N$: number of the magnetic atomic species (in the AlCoCrFeNi case, equals 4). For convenience, we assume that species $1,\ldots,S_{\magg}$ are magnetic;
		
		\item $N_s\in\N$: number of the atoms in the unit cell belonging to species $s$ (in the AlCoCrFeNi case, $N_1=\cdots=N_5=32$) ($s\in\{1,\ldots,S\}$);
		
		\item $N_{\magg}\in\N$: number of magnetic atoms in the unit cell (in the AlCoCrFeNi case, equals 64); 
		
		\item $\ell_{\ax}\in\N$: number of the atomic layers in the unit cell with axis $\ax\in\{a,b,c\}$ as normal;
		
		\item $\calI_s\subseteq\{1,\ldots,N\}$: indices of the sites occupied by species $s$ ($s\in\{1,\ldots,S_{\magg}\}$);
		
		\item $\calI_{\ax,l,s}\subseteq\{1,\ldots,N\}$: indices of the sites occupied by species $s$ located on the $l$th atomic layer with axis ax as normal ($\ax\in\{a,b,c\}$, $l\in\{1,\ldots,\ell_{\ax}\}$, $s\in\{1,\ldots,S_{\magg}\}$);
		
		\item $\y\in\{0,1\}^{N_{\magg}}$: binary indicators for the spin-up configuration, where $y_n=1$ ($n\in\{1,\ldots,N_{\magg}\}$) means that the $n$th atom spins up. Conversely, one may interpret the zero entries as spin-down atoms;
		
		\item $\y^0\in\{0,1\}^{N_{\magg}}$: binary indicators for the trial spin-up configuration. The meaning of each entry is analogous to that of $\y$. 
	\end{itemize}
	
	\par With the above notations, we can formulate the spin-up distribution problem as the following binary integer programming:
	\begin{equation}
		\begin{array}{cl}
			\max_{\y} & \inner{\y^0,\y}, \\
			\st & \lfloor N_s/(2\ell_{\ax})\rfloor\le\sum_{n\in\calI_{\ax,l,s}}y_n\le\lceil N_s/(2\ell_{\ax})\rceil,\\
			& l\in\{1,\ldots,\ell_{\ax}\},~\ax\in\{a,b,c\},~s\in\{1,\ldots,S_{\magg}\},\\
			& \sum_{n\in\calI_s}y_n=N_s/2,~s\in\{1,\ldots,S_{\magg}\},\\
			& \y\in\{0,1\}^{N_{\magg}}.
		\end{array}
		\label{eqn:ip spinup}
	\end{equation}
	
	\par The integer programming \eqref{eqn:ip spinup} maximizes the similarity between the trial and optimized spin-up structures, represented by the inner product of their corresponding binary indicators. In our experiments, we simply take $\y^0=0$. The first line of constraints enforces that the spin-up sites of each species are evenly distributed over each atomic layer. The second line of constraints imposes that the numbers of spin-up and spin-down sites for each species are identical. Upon solving the integer programming \eqref{eqn:ip spinup} (e.g., by off-the-shelf software like \textsc{matlab} \cite{MATLAB}), one can transform the solution $\y^{\star}$ to a concrete magnetic configuration. 
	
	\normalem
	
	%

\end{document}